\newtheorem{proposition}{Proposition}[section]
\newtheorem{theorem}[proposition]{Theorem}
\newtheorem{corollary}[proposition]{Corollary}
\newtheorem{lemma}[proposition]{Lemma}
\newtheorem{remark}[proposition]{Remark}
\newtheorem{definition}[proposition]{Definition}
\newtheorem{example}[proposition]{Example}
\newcommand{\nc}{\newcommand}
\nc{\I}{{\mathbf 1}}
\nc{\bN}{{\mathbf N}}
\nc{\bM}{{\mathbf M}}
\nc{\cB}{{\mathcal B}}
\nc{\cL}{{\mathcal L}}
\nc{\R}{{\mathbb R}}
\nc{\N}{{\mathbb N}}
\nc{\Z}{{\mathbb Z}}
\nc{\bF}{{\mathbf F}}
\nc{\BP}{\mathbb{P}}
\nc{\BE}{\mathbb{E}}
\nc{\BQ}{\mathbb{Q}}
\nc{\BH}{{\mathbb H}}
\nc{\BS}{{\mathbb S}}
\nc{\nl}{&\hphantom{=}}
\nc{\diff}{d}
\numberwithin{equation}{section}
\begin{document} 

\renewcommand{\thefootnote}{\fnsymbol{footnote}}
\author{Michael Andreas Klatt\footnotemark[1] \footnotemark[2] , Christian Bair\footnotemark[2] , Hartmut L\"owen\footnotemark[2] ,\\ Ren\'e Wittmann\footnotemark[2] \footnotemark[3]}
\footnotetext[1]{Deutsches Zentrum f\"ur Luft‐ und Raumfahrt (DLR), Institut f\"ur KI Sicherheit, Wilhelm-Runge-Stra\ss e 10, 89081 Ulm, Germany}
\footnotetext[1]{Deutsches Zentrum f\"ur Luft- und Raumfahrt (DLR), Institut f\"ur Materialphysik im Weltraum, 51170 K\"oln, Germany}
\footnotetext[2]{Institut f\"ur Theoretische Physik II: Weiche Materie, 
Heinrich-Heine-Universit\"at D\"usseldorf, 40225 D\"usseldorf, Germany}
\footnotetext[3]{Institut f\"ur Sicherheit und Qualit\"at bei Fleisch, Max Rubner-Institut, D-95326 Kulmbach, Germany}

\title{\Large
  Foundation of classical dynamical density functional theory:
  uniqueness of time-dependent density--potential mappings}
\date{\today}
\maketitle

\begin{abstract} \noindent %
When can we map a classical density profile to an external potential? In
equilibrium, without time dependence, the one-body density is known to
specify the external potential that is applied to the many-body system.
This mapping from a density to the potential is the cornerstone of
classical density functional theory (DFT). Here, we consider
non-equilibrium, time-dependent many-body systems that evolve from a
given initial condition. We derive explicit conditions, for example, no
flux at the boundary, that ensure that the mapping from the density to a
time-dependent external potential is unique. We thus prove the
underlying assertion of dynamical density functional theory
(DDFT)~---~without resorting to the so-called adiabatic approximation
often used in applications. By ascertaining uniqueness for all $n$-body
densities, we ensure that the proof~---~and the physical conclusions
drawn from it~---~hold for general \textit{superadiabatic} dynamics of
interacting systems even in the presence of (known) non-conservative
forces.
\end{abstract}

\noindent

\section{Introduction}

The foundation of classical density functional theory
(DFT)~\cite{mermin_thermal_1965, evans_nature_1979,
evans_density_1992,wu_density_2006} rests on the fact that the one-body
density determines the external potential and hence the underlying
Hamiltonian if the interaction potential is known. In essentially all
relevant cases, there exists a unique mapping from the one-body density
$\rho(x)$  to an external potential $V(x)$ for $x\in\Omega$ in
$d$-dimensional Euclidean space $\Omega\subset\R^d$ and for a given
interaction potential, temperature, and number of particles (or chemical
potential). Remarkably, because of this unique mapping, the one-body
density fully specifies a many-body system in equilibrium if the
interparticle interactions are known. The existence of such a unique
density--potential mapping was first proven in the context of quantum
mechanics by Hohenberg and Kohn~\cite{hohenberg_inhomogeneous_1964},
Kohn and Sham~\cite{kohn_self-consistent_1965}, and
Mermin~\cite{mermin_thermal_1965}. Mermin's generalized arguments can be
directly applied to classical many-body systems as elaborated by
Evans~\cite{evans_nature_1979} and later rigorously confirmed by Chayes,
Chayes, and Lieb~\cite{chayes_inverse_1984}. The unique mapping exists
under mild and natural conditions on the density and interparticle
interactions that essentially assume finite energies. Among others, this
result implies a formal equivalence of Mermin-Evans DFT to the
alternative framework~\cite{dwandaru_variational_2011} based on Levy's
constrained search~\cite{levy_universal_1979} (which does not a priori
restrict to density profiles that are realizable by an external
potential).

Here, we are interested in the generalization of these density--potential
mappings to the time-dependent case, in which all functions additionally
depend on time $t\in[0,\infty)$. If such a mapping exists, it is the
foundation of classical dynamical density functional theory
(DDFT)~\cite{marconi_dynamic_1999, archer_dynamical_2004,
te_vrugt_classical_2020}, as first derived by Marconi and
Tarazona~\cite{marconi_dynamic_1999} from the stochastic Langevin
equation and later by Archer and Evans~\cite{archer_dynamical_2004} from
the corresponding Smoluchowski equation or by Espa\~{n}ol and L\"owen
using the projection operator formalism~\cite{espanol_derivation_2009}.
More specifically, we study the fundamental relation between the
time-dependent one-body density $\rho(x,t)$ and external potential
$V(x,t)$. These considerations also elucidate the role of the one-body
current $j(x,t)$. 

The central question of this paper is: under which conditions can we map
a classical time-dependent density $\rho: \Omega\times[0,\infty)\to \R$
to an external potential $V: \Omega\times[0,\infty)\to \R$? 
Such a mapping always refers to the functions and not their
instantaneous value at time $t$, and it presupposes some well-specified
initial conditions.
Importantly, in this context of statistical physics, a \textit{unique}
mapping from $\rho$ to $V$ implies that the density $\rho$ can only be
realized by the external potential $V$ and not by another potential
$V'\neq V$. However, the term \textit{unique} does not imply that the
mapping is injective. In fact, already in equilibrium, at phase
coexistence, two different densities $\rho \neq \rho'$ can be realized
by the same potential $V$.

As in the case of equilibrium DFT, once we have proven the existence of
a unique density--potential mapping, we can assert that the full
knowledge of the density profile $\rho: \Omega\times[0,\infty)\to \R$ at
all times specifies the forces driving the dynamics at all times and
hence contains all relevant information about the system, including
correlations of any order. Hence, our central question is of both
fundamental importance and practical relevance to the study of
time-dependent many-body systems. 

We, therefore, consider the exact dynamics driven by the full
non-equilibrium interaction force, i.e., we do not rely on an
``adiabatic'' approximation that equates equilibrium and non-equilibrium
correlations (which is usually required for explicit calculations in
DDFT)~\cite{marconi_dynamic_1999, archer_dynamical_2004}. Consequently,
our results also pertain to the recently developed
superadiabatic-DDFT~\cite{tschopp_first-principles_2022,
tschopp_superadiabatic_2023} as well as to the framework of power
functional theory (PFT)~\cite{schmidt_power_2013,schmidt_power_2022}.
Both approaches incorporate ``superadiabatic'' forces
\cite{fortini2014superadiabatic} that are neglected in standard
(adiabatic) DDFT approximations. The underlying variational principle of
PFT, based on Levy's constrained search~\cite{levy_universal_1979},
entails the existence of a unique mapping from the two functions
$\rho(x,t)$ and $j(x,t)$ to the function $V(x,t)$. A recent
reformulation of the PFT formalism suggests that the only relevant field
is $j(x,t)$~\cite{lutsko_oettel_2021}. Since our pursuit of unique
density--potential mappings neither requires an approximation nor a
specific framework, our results shed light on the relation between DDFT
and PFT on a formal level and help, in particular, to better understand
the role of the current.

In quantum mechanics, an argument for the unique mapping from
time-dependent densities $\rho(x,t)$ to potentials $V(x,t)$ was provided
by Runge and Gross in 1984~\cite{runge_density-functional_1984}, which
became the foundation of time-dependent density functional theory
(TDDFT). Assuming time-analytic potentials and smooth densities, they
linked the quest for uniqueness of the density--potential mapping to
that of the solution for an elliptic partial differential
equation~(PDE). However, as pointed out
later~\cite{xu_current-density-functional_1985,
dhara_density-functional_1987, gross_time-dependent_1990,
gross_introduction_2012}, this solution is unique only under certain
conditions on both $\rho(x,t)$ and $V(x,t)$.
These joint assumptions on the density and potential are more complex
than in equilibrium, where the conditions for a unique $V(x)$ only depend on
$\rho(x)$~\cite{chayes_inverse_1984}. Intuitively speaking, these more
intricate assumptions arise in the time-dependent case because more
states are allowed than in equilibrium. 

For classical systems, Chan and Finken~\cite{chan_time-dependent_2005}
asserted uniqueness following the idea of Runge and
Gross~\cite{runge_density-functional_1984}. However, because higher-body
correlations due to interparticle interactions were omitted, the
argument so far holds only under the adiabatic approximation.
Moreover, no conditions have hitherto been stated for the
uniqueness of classical density--potential mappings. In fact, this
omission is more critical in the classical setting than it would be in
the quantum case since, for the latter, loopholes to unique
mappings are considered to be ``largely
unphysical''~\cite{gross_introduction_2012} and are hence often
neglected. By contrast, diverging potentials are not only relevant but
even common in classical statistical physics.

In this work, we begin to close these two gaps by proving explicit
conditions for the uniqueness of classical density--potential mappings
based on an exact hierarchy for the $n$-body densities. Thus, our
conditions are independent of the adiabatic approximation, and, somewhat
surprisingly, they also allow for (known) non-conservative forces. Thus,
we begin to provide a rigorous foundation of classical DDFT. Since we
follow the proof of Runge and Gross
\cite{runge_density-functional_1984}, we have to assume time-analytic
potentials and smooth densities. We thus provide only sufficient
conditions for uniqueness. In the outlook, we briefly discuss the
general case as an open problem for further research in mathematical
physics.

To this end, we first specify our setting in Sec.~\ref{sec:setting} and
derive the hierarchy of reduced Smoluchowski equations for all
time-dependent $n$-body densities $\rho_n(x_1,\dots,x_n,t)$ in
Sec.~\ref{sec:Smoluchowski}; see Theorem~\ref{thm:smoluchowski}. Since
we are concerned with possibly diverging potentials, we accurately
derive the boundary contributions and find that all corresponding terms
vanish if and only if the Yvon-Born-Green (YBG)-hierarchy holds on
average at the boundary (for which we also provide a physical intuitive
condition).

Then, we prove our main results in Sec.~\ref{sec:uniqueness}, i.e., we
derive generic conditions that guarantee that two given external
potentials $V(x,t)$ and $V'(x,t)$ do not yield the same time-dependent
one-body density $\rho(x,t)$. We use this general result to specifically
show that the density-potential mapping is unique for no-flux boundary
conditions.
More specifically, similar to the idea of Runge and Gross (or Chan and
Finken)~\cite{runge_density-functional_1984, chan_time-dependent_2005},
we assume analytic potentials and can, therefore, reduce the uniqueness
of the mapping to the uniqueness of a solution to a (semi-)elliptic PDE.
In contradistinction to the available proofs in quantum
mechanics~\cite{runge_density-functional_1984,
gross_time-dependent_1990, gross_introduction_2012,
ruggenthaler_existence_2015}, we here use the hierarchy of reduced
Smoluchowski equations. Our proof of Theorem~\ref{thm:surface},
therefore, requires no approximation of $n$-body correlations; in
particular, our uniqueness theorems are not restricted to adiabatic
DDFT.
Moreover, our rephrasing of the problem allows us to obtain a physically
intuitive condition for uniqueness. Theorem~\ref{thm:no-normal-flux}
asserts that a unique solution can be guaranteed for no-flux boundary
conditions or, in fact, any specified flux in or out of the system.

In Sec.~\ref{sec:examples}, we demonstrate that such a simultaneous
condition on the density and potential is inevitable. More specifically,
we present explicit loopholes to uniqueness where the same $\rho(x,t)$
is attained for two different external potentials. Obviously, these
examples violate the conditions of our theorems. For an exponentially
fast decaying density profile, a non-unique external potential must
necessarily include an exponential divergence (in space). In contrast,
if the density profile has heavy tails, already a polynomial divergence
of $V(x,t)$ can lead to non-unique mappings. Hence, the conditions on
the asymptotic behavior have to depend on both $\rho(x,t)$ and $V(x,t)$.

To conclude the discussion of loopholes in Sec.~\ref{sec:examples}, we
embed our findings in the framework of PFT. A unique mapping to an
external potential implies a unique current $j(x,t)$. In contrast, if a
suitable external potential $V'(x,t)$ that violates our conditions is
added, it causes a divergence-free current $j'(x,t)$ that does not
change the density $\rho(x,t)$. Such loopholes have been simulated via a
numerical procedure known as \textit{custom
flow}~\cite{de_las_heras_custom_2019, de_las_heras_perspective_2023}
that determines, in line with PFT, the unique external force field as a
functional of $\rho(x,t)$ and $j(x,t)$. The hierarchy of Smoluchowski
equations from Theorem~\ref{thm:smoluchowski} emphasizes the necessity
of this approach for interacting systems. For the ideal gas (or under
the adiabatic approximation), our analytic formula~\eqref{eq:algo} can
be applied for a systematic construction of non-unique
density--potential mappings for effectively one-dimensional systems. 

Our mathematical findings are summarized and interpreted in physical
terms in Section~\ref{sec:outlookPHYS}, where we focus on the insights
in view of necessary approximations required for explicit predictive
calculations using DDFT and related theories. Finally, in
Section~\ref{sec:outlook}, we conclude and discuss open mathematical
questions. We also offer pathways towards further uniqueness theorems
for a broad class of soft matter systems.

\section{Densities and Smoluchowski operators}
\label{sec:setting}

We here consider an overdamped many-body system with a fixed number of
indistinguishable particles $N>0$ in an open, bounded domain
$\Omega\subseteq\R^d$, $d\geq1$. Furthermore, we assume that $\Omega$ is
connected (otherwise we would consider two or more independent systems)
and that $\Omega$ has a smooth boundary $\partial\Omega$.

We characterize our system by its time-dependent symmetric $N$-body probability 
density 
\begin{align}
  P_{N}:\Omega^N\times [0,\infty) \to [0,\infty); \quad (x^N,t)\mapsto P_N(x^N,t) .
  \label{eq:P_N}
\end{align}
In the following, 
we will write $x^N$ as a shorthand notation for a collection of 
positions $x_1,\ldots,x_N\in\Omega$ and time is denoted by $t\geq 0$.
Note that $P_{N}$ is a simple function of time $t$ but a probability 
density in the spatial coordinates $x^N$, i.e., $P_N(x^N, t)$ is the probability 
density of finding particles at positions $x^N$ at time $t$. 
Hence, at any time $t\geq 0$, the integral over $\Omega^N$ is normalized, 
\begin{align}
  \int P_N(x^N,t)\,dx^N = 1.
  \label{eq:norm}
\end{align}
Here and in the following, each unspecified integral is over the full 
domain.

We obtain the (reduced) $n$-body densities $\rho_n:\Omega^n\times
[0,\infty) \to [0,\infty)$ with $n\leq N$ from the symmetric $N$-body
probability density $P_{N}$ by applying the $n$-body density operator:
\begin{align}
  \begin{aligned}
  \rho_n(x^n,t) 
  :=\;&\int P_N(y^N,t)\sum_{i_1\neq i_2 \neq \dots\neq i_n}\delta(x_{1}-y_{i_1})\dots\delta(x_{n}-y_{i_n})\,dy^N\\
   =\;&\frac{N!}{(N-n)!}\int P_N(y^N,t)\delta(x_1-y_1)\dots\delta(x_n-y_n)\,dy^N\\
   =\;&\frac{N!}{(N-n)!}\int P_N(x^n,y^{N-n},t)\,dy^{N-n},
  \end{aligned}
  \label{eq:rho_n}
\end{align}
where $\delta$ denotes a Dirac delta distribution. In the following we use 
the shorthand notation for the number density $\rho:=\rho_1$.

The interaction between the particles is given by a two-body force
$K_2:\Omega^2\to\R^d$, which consists of a pair potential
$U:\Omega^2\to\R$ and a divergent-free non-conservative force
$Q:\Omega^2\to\R^d$, i.e., $K_2(x,y) := - \nabla_x U(x,y) + Q(x,y)$,
where $Q(x,y)$ and $K_2(x,y)$ denote forces of a particle at $y$ acting
on a particle at $x$. As usual, the pair potential is symmetric and only
depends on the relative distance, i.e., we define
$U(x-y):=U(x-y,0)=U(x,y)$ for $x,y\in\Omega$. We assume that $K_2\in
C^1(\Omega^2)$ and, that $K_2$ is bounded. The inverse temperature
$\beta$ and diffusion constant $D$ are fixed.

The many-body system is also subject to an external potential
$V:\Omega\times[0,\infty)\to \R$ that is twice continuously differentiable
in space (i.e., for $t\geq 0$, the function $V_t:\Omega\to\R, x\mapsto
V(x,t)$ is in $C^2(\Omega)$) and $\nabla_x V_{t}:\Omega\to \R^d,$
$x\mapsto\nabla_x V(x,t)$ is continuously extensible to $\partial\Omega$.
(Note, however, that our bounded system can be physically interpreted as
being confined within hard walls, i.e., an infinitely steep potential
on $\partial\Omega$ in addition to $V$.)

Furthermore we consider a divergent-free, non-conservative, one-body,
time-dependent force $R:\Omega\times[0,\infty)\to\R^d$. We again assume,
that $R$ is infinitely differentiable in time and continuous
differentiable in space. We combine these external one-body forces in
the definition of $K_1(x,t) := - \nabla_x V(x,t) + R(x,t)$.

The evolution of $P_{N}$ under such an external potential obeys 
the following $N$-body Smoluchowski equation:
\begin{align}
   {\partial_t P_N(x^N,t)} &= 
     D \sum_{i=1}^{N} \nabla_{x_i}^2P_N(x^N,t)
     - D\beta \sum_{i=1}^{N}\nabla_{x_i}\left[P_N(x^N,t)F_{N,i}(x^N,t)\right],
  \label{eq:smoluN}
\end{align}
where the force $F_{N,i}:\Omega^N\times [0,\infty)\to \R^d$ on particle $i\in\{1,2,\dots, N\}$ is 
defined as
\begin{align}
  F_{N,i}(x^N,t):= K_1(x_i,t) + \sum\limits_{\substack{j=1 \\ j\neq i}}^N K_2(x_i,x_j) .
  \label{eq:force}
\end{align}
The same definition of $F_{N,i}$ holds for any number of particles,
say $n< N$.
Let us also point out here that the index $i$ always refers to the
$i$th argument, e.g.,
$F_{n+1,n+1}(x^n,y,t)=K_1(y,t)+\sum_{j=1}^nK_2(y,x_j)$.

In shorthand notation, we combine all forces into a single vector 
$F_N:\Omega^N\times[0,\infty)\to\R^{dN}$ (and analogously define the gradient $\nabla_{x^N}$).
We, moreover, define the $N$-body current field $J_N:\Omega^N\times[0,\infty)\to\R^{dN}$ as
\begin{align}
  J_N(x^N,t):= -D \nabla_{x^N} P_N(x^N,t) + D\beta  P_N(x^N,t) F_N(x^N,t)
  \label{eq:current_defined}
\end{align}
that obeys the continuity equation
\begin{align}
  {\partial_t P_N(x^N,t)} = - \nabla_{x^N} J_N(x^N,t),
  \label{eq:current_smoluN}
\end{align}
which is then equivalent to the Smoluchowski equation.
By defining the Smoluchowski operator
\begin{align}
  \hat{\mathcal{O}}_N := D \sum_{i=1}^{N}\nabla_{x_i} \left[ \nabla_{x_i} - \beta F_{N,i}(x^N,t) \right],
\end{align}
we can write the Smoluchowski equation~\eqref{eq:smoluN} more succinctly as
\begin{align}
  {\partial_t P_N(x^N,t)} &= \hat{\mathcal{O}}_N P_N(x^N,t).
  \label{eq:smoluoperator}
\end{align}
For our derivation of a reduced Smoluchowski equation for 
$\rho_n(x^n,t)$ in the next section, it is useful to define 
\textit{partial} Smoluchowski operators as
\begin{align}
  \hat{\mathcal{O}}_{n,N}^{-} & := D \sum_{i=1}^n   \nabla_{x_i} \left\{ \nabla_{x_i} - \beta F_{N,i}(x^N,t) \right\},\\
  \hat{\mathcal{O}}_{n,N}^{+} & := D \sum_{i=n+1}^N \nabla_{x_i} \left\{ \nabla_{x_i} - \beta F_{N,i}(x^N,t) \right\}.
\end{align}

The behavior of the system is determined by an initial value boundary
problem that, in our case, is defined by the Smoluchowski
equation~\eqref{eq:smoluoperator}, the initial condition
$P_N^{(0)}:\Omega^N\to \R, x^N\mapsto P_N^{(0)}(x^N):= P_{N}(x^N,0)$ at
time $t=0$ (for all spatial coordinates), and a boundary condition on
$\partial \Omega$ (for all times).
Since we here consider a system with a fixed number of particles,
a natural choice is that there is no $N$-body flux in or out of the
system, i.e.,
\begin{align}\label{eq:no-flux}
  n(x^N) J_N(x^N,t) = 0, \text{ for all } x^N\in\partial\Omega^N,
\end{align}
where $n(x^N)$ denotes the outward unit normal on $\partial\Omega^N$.
Another prominent example (especially for simulations) are periodic
boundary conditions. In both examples, the number of particles is
conserved.

We say that a solution $P_{N}$ is \textit{well behaved} if it has
the following properties: 
\begin{enumerate}[label=($\roman*$)]
  \item $P_N\in C^2(\Omega^N\times [0,\infty))$, i.e., for $t=0$ twice
  differentiable from the right-hand side;
  \item $P_N$ and $\nabla_{x^N} P_N$ are continuously extensible to
  $\partial\Omega^N$; 
  \item $\partial_t P_{N}$, $\hat{\mathcal{O}}_{n}P_{N}$,
  $\hat{\mathcal{O}}_{n,N}^{\pm}P_{N}$, $\nabla_x K_2(x,y)$ and $\nabla_{x^N}
  P_N$ are bounded for all $n=1,2,\dots N$; 
  \item the average $n$-body interaction force $E_{n,i}: \Omega^n\times
  [0,\infty) \to \R$ on particle $i\in\{1,\ldots,n\}$ that is defined by 
  \begin{align}
    E_{n,i}(x^n,t):=  \int \rho_{n+1}(x^n,y,t) K_2(x_i,y) \,dy
    \label{eq:Ei}
  \end{align}
  exists and is continuously differentiable on $\Omega^n$ and for $t\geq 0$. 
  For convenience, we also define $E_{n,i}\equiv 0$ for $n\geq N$. The index 
  is analogously defined to that of the force $F_{n,i}$.
\end{enumerate}

Here, we always assume the existence of a well-behaved solution. Even
though a proof of existence is beyond the scope of this paper, we
briefly discuss in the outlook a possible approach and conditions that
can be expected.

The main question of this paper can now be summarized as follows.
Consider $N$ particles with known interparticle interaction forces.
Given the initial conditions $P_N^{(0)}$, does the density $\rho$ as a
function of time and space uniquely specify the potential $V$, or in
more mathematical terms, is a mapping from $\rho$ to $V$ well defined?

As noted in the introduction, we ask for a mapping from $\rho$ to a
single $V$, but this mapping will in general not be injective, i.e., two
densities can be mapped to the same potential. For example, already in
equilibrium, at phase coexistence, different density profiles can both
be realized by the same potential. 


\section{Hierarchy of reduced Smoluchowski equations}
\label{sec:Smoluchowski}

In the physics literature, boundary terms are often neglected in the
derivation of a reduced Smoluchowski equation~\cite{lowen_dynamical_2017}.
These boundary terms may be essential, however, to derive necessary
conditions for a unique density--potential mapping (since non-unique
loopholes involve diverging external potentials). 

We, therefore, first derive a reduced Smoluchowski equation paying
special attention to the boundary terms. Moreover, since we do not rely
on the adiabatic approximation but instead consider the exact
dependencies between $n$-body densities, we derive a complete set of
reduced Smoluchowski equations for all orders.

\begin{theorem}
  \label{thm:smoluchowski}
  The reduced $n$-body density $\rho_n$ with $1\leq n < N$
  obeys the following reduced Smoluchowski equation:
\begin{align}
  \begin{aligned}
    {\partial_t \rho_n(x^n,t)}
    &= D\sum_{i=1}^{n}\nabla_{x_i}\left\{ \left[ \nabla_{x_i} - \beta F_{n,i}(x^n,t) \right] \rho_n(x^n,t)- \beta  E_{n,i}(x^n,t)\right\} \\
  &\,\, + D\oint_{\partial\Omega}\left\{ 
	\left[\nabla_{y} -\beta F_{n+1,n+1}(x^n,y,t)\right]\rho_{n+1}(x^n,y,t) -{\beta } E_{n+1,n+1}(x^n,y,t)
    \right\}\,dS(y),
  \end{aligned}
  \label{eq:reducedSmoluchowski}
\end{align}
  where the last term represents a surface integral with respect to the outward unit normal vector.
\end{theorem}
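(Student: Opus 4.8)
The plan is to obtain the reduced equation for $\rho_n$ by integrating the full $N$-body Smoluchowski equation~\eqref{eq:smoluoperator} over the $N-n$ ``spectator'' coordinates $y^{N-n}$, after setting the first $n$ arguments equal to $x^n$ and using the representation $\rho_n(x^n,t) = \frac{N!}{(N-n)!}\int P_N(x^n,y^{N-n},t)\,dy^{N-n}$ from~\eqref{eq:rho_n}. Concretely, I would write $\partial_t\rho_n(x^n,t) = \frac{N!}{(N-n)!}\int \hat{\mathcal{O}}_N P_N(x^n,y^{N-n},t)\,dy^{N-n}$ and split $\hat{\mathcal{O}}_N = \hat{\mathcal{O}}_{n,N}^- + \hat{\mathcal{O}}_{n,N}^+$. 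The ``$-$'' part keeps derivatives only in the retained variables $x_1,\dots,x_n$, so these derivatives commute with $\int dy^{N-n}$ and, after using the symmetry of $P_N$ to replace the $j$-sums in $F_{N,i}$ by $n-1$ genuine retained-particle terms plus $(N-n)$ identical spectator terms, the latter collapse into the average interaction force $E_{n,i}$ via definition~\eqref{eq:Ei}. That reproduces the first (bulk) line of~\eqref{eq:reducedSmoluchowski}.

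The ``$+$'' part is where the boundary term is born: $\hat{\mathcal{O}}_{n,N}^+ P_N$ is a sum of divergences $\nabla_{y_k}[\dots]$ in the integration variables, so $\int_{\Omega^{N-n}} \hat{\mathcal{O}}_{n,N}^+ P_N\,dy^{N-n}$ reduces, by the divergence theorem applied in each spectator coordinate, to surface integrals over $\partial\Omega$ in that one coordinate (with the remaining $N-n-1$ spectators still integrated over $\Omega$). Using the symmetry of $P_N$ again, all $N-n$ such contributions are equal, so the prefactor arithmetic $\frac{N!}{(N-n)!}\cdot(N-n) = \frac{N!}{(N-n-1)!}$ is exactly what turns the remaining $\int_{\Omega^{N-n-1}}P_N\,dy^{N-n-1}$ (with one coordinate pinned to $y\in\partial\Omega$) into $\rho_{n+1}(x^n,y,t)$. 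The integrand $[\nabla_{y}-\beta F_{N,n+1}]P_N$ must then be rewritten in terms of $\rho_{n+1}$ and $E_{n+1,n+1}$: the $\nabla_y$ and $K_1(y,t)$ and retained-particle $K_2(y,x_j)$ pieces pass through the $y^{N-n-1}$-integration directly (giving the $F_{n+1,n+1}$ terms), while the $K_2(y,\cdot)$ coupling to the remaining spectators, by the same symmetry/counting argument, produces precisely $-\beta E_{n+1,n+1}(x^n,y,t)$. Assembling these gives the surface-integral line of~\eqref{eq:reducedSmoluchowski}.

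To make this rigorous I would lean on the well-behavedness hypotheses: property~(i)--(ii) justify differentiating under the integral sign and applying the divergence theorem on $\Omega^{N-n}$ (its boundary is a finite union of pieces $\Omega^{k}\times\partial\Omega\times\Omega^{N-n-1-k}$, and $P_N$ together with $\nabla_{x^N}P_N$ extends continuously there); property~(iii) supplies the domination needed to interchange $\partial_t$ with $\int dy^{N-n}$ and to control $\hat{\mathcal{O}}_{n,N}^{\pm}P_N$; and property~(iv) guarantees that $E_{n,i}$ and $E_{n+1,n+1}$ genuinely exist and are $C^1$, so that the bulk divergence $\nabla_{x_i}E_{n,i}$ in the first line is well defined. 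The main obstacle I anticipate is purely bookkeeping rather than analytic: correctly tracking the combinatorial prefactors when the index $j$ in $F_{N,i}$ ranges over a mix of retained and spectator particles, and ensuring that the boundary term lands with exactly one integration variable on $\partial\Omega$ and the rest in $\Omega$ (so that it is $\rho_{n+1}$, not some partially-integrated object, that appears under $\oint_{\partial\Omega}$). A secondary subtlety is that $F_{N,n+1}$ involves $K_2(y,x_j)$ for the \emph{retained} particles $x_j$, $j\le n$, which do \emph{not} get integrated and therefore must be pulled outside the spectator integral before identifying $\rho_{n+1}$ — this is what distinguishes the $F_{n+1,n+1}$ term from the $E_{n+1,n+1}$ term in the surface integrand.
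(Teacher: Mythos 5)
Your proposal matches the paper's own proof step for step: the same splitting $\hat{\mathcal{O}}_N = \hat{\mathcal{O}}_{n,N}^- + \hat{\mathcal{O}}_{n,N}^+$, the same identification of the spectator-sum in $F_{N,i}$ with $E_{n,i}$ after integration, the same application of the divergence theorem to $\hat{\mathcal{O}}_{n,N}^+$ followed by the symmetry/counting argument $\frac{N!}{(N-n)!}\cdot(N-n)=\frac{N!}{(N-n-1)!}$, and the same final decomposition of $F_{N,n+1}$ into $F_{n+1,n+1}$ plus the spectator coupling that yields $E_{n+1,n+1}$. The only difference is that you spell out more explicitly which well-behavedness hypotheses justify each analytic interchange, which the paper leaves largely implicit.
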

\begin{proof}
Under our assumptions, we can apply the $n$-body density operator 
to the $N$-body Smoluchowski differential 
equation~\eqref{eq:smoluoperator}. First, we use
\begin{align}
\partial_t \rho_n(x^n,t) = \frac{N!}{(N-n)!}\int \partial_t P_N(x^n,y^{N-n},t)\,dy^{N-n}  
\end{align}
and $\hat{\mathcal{O}}_N = \hat{\mathcal{O}}_{n,N}^{-} +\hat{\mathcal{O}}_{n,N}^{+}$
to obtain
\begin{align}
  \begin{aligned}
  {\partial_t \rho_n(x^n,t)}
  &=    \frac{N!}{(N-n)!}\int \hat{\mathcal{O}}_{n,N}^{-}P_N(x^n,y^{N-n},t)\,dy^{N-n}\\
  \nl + \frac{N!}{(N-n)!}\int \hat{\mathcal{O}}_{n,N}^{+}P_N(x^n,y^{N-n},t)\,dy^{N-n}.
  \end{aligned}
  \label{eq:partialsplit}
\end{align}
To simplify the first term on the right-hand side, we note that 
\begin{align}
  \hat{\mathcal{O}}_{n,N}^{-} = \hat{\mathcal{O}}_{n} - D\beta\sum_{i=1}^{n}\nabla_{x_i}\left[\sum_{j=1}^{N-n} K_2(x_i,y_j) \right]
\end{align}
and hence
\begin{align}
\begin{aligned}
  &\frac{N!}{(N-n)!}\int \hat{\mathcal{O}}_{n,N}^{-} P_N(x^n,y^{N-n},t)\,dy^{N-n}
  = \hat{\mathcal{O}}_{n} \rho_n(x^n,t)\\
  &\quad- D\beta\frac{N!}{(N-n)!}\int \sum_{i=1}^{n}\nabla_{x_i}\left[ P_N(x^n,y^{N-n},t)\sum_{j=1}^{N-n} K_2(x_i,y_j) \right]\,dy^{N-n}.
\end{aligned}
\label{eq:tmp}
\end{align}
Using the average $n$-body interaction force $E_{n,i}$ from~\eqref{eq:Ei},
we can further simplify the remaining integral: 
\begin{align}
  &\frac{N!}{(N-n)!}\int \sum_{i=1}^{n}\nabla_{x_i}\left[ P_N(x^n,y^{N-n},t)\sum_{j=1}^{N-n} K_2(x_i,y_j) \right]\,dy^{N-n} \\
  &\qquad = \sum_{i=1}^{n}\nabla_{x_i}\int \sum_{j=1}^{N-n}\frac{N!}{(N-n)!}  P_N(x^n,y^{N-n},t) K_2(x_i,y_j) \,dy^{N-n} \allowdisplaybreaks\\
  &\qquad = \sum_{i=1}^{n}\nabla_{x_i}\int \frac{N!}{[ N-(n+1) ]!} \int P_N(x^n,y,z^{N-(n+1)},t) \, dz^{N-(n+1)} K_2(x_i,y) \, dy \\ 
  &\qquad = \sum_{i=1}^n \nabla_{x_i} E_{n,i}(x^n,t).
\end{align}
Inserting this result in \eqref{eq:tmp} and finally in \eqref{eq:partialsplit}, we have
\begin{align}
  \begin{aligned}
  {\partial_t \rho_n(x^n,t)}
  &= \hat{\mathcal{O}}_{n} \rho_n(x^n,t)- D\beta \sum_{i=1}^n \nabla_{x_i} E_{n,i}(x^n,t) + B(x^n,t),
  \end{aligned}
  \label{eq:withboundary}
\end{align}
where we define the \textit{boundary term} by
\begin{align}
   B(x^n,t):=\; &
   \frac{N!}{(N-n)!}\int \hat{\mathcal{O}}_{n,N}^{+} P_N(x^n,y^{N-n},t)\,dy^{N-n}\\
   =\;& D \frac{N!}{(N-n)!}\int  \sum_{i=1}^{N-n} \nabla_{y_i} \left\{ \nabla_{y_i} - \beta F_{N,n+i}(x^n,y^{N-n},t) \right\} P_N(x^n,y^{N-n},t)\,dy^{N-n}\\
   =\;& D \frac{N!}{[N-(n+1)]!}\oint_{\partial\Omega} \int \left\{ \nabla_{y} - \beta F_{N,n+1}(x^n,y,z^{N-(n+1)},t) \right\} \\
   &\hspace{6.3cm}\times P_N(x^n,y,z^{N-(n+1)},t)\,dz^{N-(n+1)}\,dS(y).
\end{align}
The last equality holds by the divergence theorem.
To prove~\eqref{eq:reducedSmoluchowski}, it remains to show that
\begin{align}
  \frac{N!}{[N-(n+1)]!} \int &\left\{ \nabla_{y} - \beta F_{N,n+1}(x^n,y,z^{N-(n+1)},t) \right\} P_N(x^n,y,z^{N-(n+1)},t)\,dz^{N-(n+1)}\\
   &=\left[\nabla_{y} -\beta F_{n+1,n+1}(x^n,y,t)\right]\rho_{n+1}(x^n,y,t) -{\beta } E_{n+1,n+1}(x^n,y,t).
\end{align}
This assertion follows from the fact that
\begin{align}
  F_{N,n+1}(x^n,y,z^{N-(n+1)},t) = F_{n+1,n+1}(x^{n},y,t) + \sum_{j=1}^{N-(n+1)} K_2(y,z_j) 
\end{align}
and (for $n<N-1$)
\begin{align}
  &\frac{N!}{[N-(n+1)]!} \int \sum_{j=1}^{N-(n+1)} K_2(y,z_j) P_N(x^n,y,z^{N-(n+1)},t)\,dz^{N-(n+1)}\\
  &\hspace{1cm}= \frac{N!}{[N-(n+2)]!} \int K_2(y,z)\int P_N(x^n,y,z,v^{N-(n+2)},t)\, dv^{N-(n+2)}\,dz,
\end{align}
where the last expression is, by definition, equal to $E_{n+1,n+1}(x^n,y,t)$.
\end{proof}

\begin{remark}\label{re:YBG}
  The two expressions in curly brackets in the first and second line of
  the reduced Smoluchowski equation~\eqref{eq:reducedSmoluchowski} for
  $\rho_n$ are those of the YBG
  hierarchy~\cite[Sec.~4.2]{hansen_theory_2013} for order $n$ and $n+1$,
  respectively. In equilibrium, the two expressions always vanish, which
  is in agreement with $\partial_t \rho_n(x^n,t)= 0$ for all
  $x^n\in\Omega^n$ and $t\geq 0$. Out of equilibrium, these boundary
  terms vanish for all orders if and only if the YBG hierarchy holds
  \textit{on average} at the boundary.
\end{remark}

In the following, we denote the components of $J_N$ that correspond to
the $i$th particle by $J_{N,i}:\Omega^N\times[0,\infty)\to\R^{d}$ (with
$i=1,\dots,N$). We are interested in physically relevant boundary
problems for which the following condition holds
\begin{align}
  \int \nabla_{x_i} J_{N,i}(x^N,t) \,dx_i = \oint_{\partial\Omega} J_{N,i}(x^{N},t) \,dS(x_i) & = 0, \text{ for all }i=1, \dots N.
  \label{eq:boundary}
\end{align}
This condition guarantees that a time-dependent version of the YGB
hierarchy holds on average at the boundary for all orders.

\begin{proposition}
  \label{prop:YBG}
  If \eqref{eq:boundary} holds, then the boundary term in
  \eqref{eq:reducedSmoluchowski} vanishes for all $n < N$.
\end{proposition}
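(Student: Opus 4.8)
The plan is to rewrite the boundary term of~\eqref{eq:reducedSmoluchowski} in terms of the $N$-body current $J_N$ from~\eqref{eq:current_defined} rather than the reduced objects $\rho_{n+1}$ and $E_{n+1,n+1}$, because it is $J_N$ that appears in the hypothesis~\eqref{eq:boundary}. Recall from the proof of Theorem~\ref{thm:smoluchowski} that the boundary term in~\eqref{eq:reducedSmoluchowski} equals
\begin{align}
  B(x^n,t) = \frac{N!}{(N-n)!}\int \hat{\mathcal{O}}_{n,N}^{+} P_N(x^n,y^{N-n},t)\,dy^{N-n}.
\end{align}
The first step is to notice that the integrand is a pure divergence in the integration variables: directly from~\eqref{eq:current_defined} one has, for every $i$, $D\,\nabla_{x_i}\!\left[\nabla_{x_i}-\beta F_{N,i}(x^N,t)\right]P_N(x^N,t) = -\nabla_{x_i}J_{N,i}(x^N,t)$, hence $\hat{\mathcal{O}}_{n,N}^{+}P_N = -\sum_{i=n+1}^{N}\nabla_{x_i}J_{N,i}$ and therefore
\begin{align}
  B(x^n,t) = -\frac{N!}{(N-n)!}\sum_{i=n+1}^{N}\int \nabla_{y_{i-n}}J_{N,i}(x^n,y^{N-n},t)\,dy^{N-n}.
\end{align}

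Second, I would evaluate each summand by integrating one coordinate at a time, carrying out the $y_{i-n}$-integration first. The boundedness of $\nabla_{x^N}P_N$, of $K_2$, and of $\hat{\mathcal{O}}^{\pm}_{n,N}P_N$ assumed for a well-behaved solution, together with the boundedness of $\Omega$, makes every integrand bounded on a finite-measure product domain, so Fubini's theorem permits the reordering. Since $\partial\Omega$ is smooth and $J_{N,i}$ --- built from $P_N\in C^2$ and the $C^1$ forces $K_1,K_2$ --- is $C^1$ on $\Omega^N$ with $\nabla_{x^N}P_N$ (and hence $J_N$) continuously extensible to $\partial\Omega^N$, the divergence theorem applies in the single variable $y_{i-n}$, giving
\begin{align}
  \int \nabla_{y_{i-n}}J_{N,i}(x^n,y^{N-n},t)\,dy_{i-n} = \oint_{\partial\Omega}J_{N,i}(x^n,y^{N-n},t)\,dS(y_{i-n}).
\end{align}

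Third, I invoke~\eqref{eq:boundary} with the index $i\in\{n+1,\dots,N\}$: it states precisely that this inner boundary flux vanishes identically, for every value of the remaining coordinates and every $t\ge0$. Thus each summand vanishes after the remaining $(N-n-1)$ integrations, and $B(x^n,t)\equiv0$ on $\Omega^n\times[0,\infty)$, which is the claim.

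I do not anticipate a real obstacle; the only thing to be careful about is the bookkeeping --- tracking that the integration variable $y_{i-n}$ is the $i$-th argument of the full $N$-body configuration for $i\in\{n+1,\dots,N\}$ --- and checking that the regularity encoded in \emph{well behaved} is exactly what legitimizes applying Fubini's theorem and the divergence theorem coordinate by coordinate. Conceptually, condition~\eqref{eq:boundary} is engineered so that the $(N-n)$-fold surface integral in~\eqref{eq:reducedSmoluchowski} decomposes into a sum of terms each of which, after the remaining integrations, reduces to a one-variable boundary flux that is assumed to be zero.
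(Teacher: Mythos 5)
Your proof is correct and follows essentially the same route as the paper: express the boundary term as an integral involving the $N$-body current $J_{N,i}$, exchange the order of integration (so the single-coordinate surface integral is innermost), and apply~\eqref{eq:boundary}. The only cosmetic difference is that the paper first packages the boundary term as $\oint_{\partial\Omega}J_{n+1,n+1}\,dS(y)$ via an intermediate $n$-body current $J_n$ and then relates it to $J_{N,n+1}$ by marginalization, whereas you go straight from $\hat{\mathcal{O}}_{n,N}^{+}P_N=-\sum_{i>n}\nabla_{x_i}J_{N,i}$ and sum over the $N-n$ equivalent coordinates; this is the same computation unrolled, and your explicit invocation of Fubini and the coordinate-wise divergence theorem makes the interchange slightly more transparent than the paper's terser version.
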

\begin{proof}
Combining $F_{n,1},\dots F_{n,n}$ into one vector-valued function $F_n:
\Omega^n\times [0,\infty)\to \R^{dn}$ (analogous to $F_N$) and
$E_{n,1},\dots E_{n,n}$ into $E_n: \Omega^n\times [0,\infty)\to
\R^{dn}$, we define the $n$-body current $J_n: \Omega^n\times
[0,\infty)\to \R^{dn}$ as
\begin{align}
    J_n(x^n,t) := -D\nabla_{x^n} \rho_n(x^n,t) + \beta D F_n(x^n,t) \rho_n(x^n,t) +\beta D E_n(x^n,t).
    \label{eq:def-n-current}
\end{align}
We use the similar notation as by $J_{N,i}$. Then
\eqref{eq:reducedSmoluchowski} can then be rewritten as
\begin{align}
    {\partial_t \rho_n(x^n,t)}
    &= - \nabla_{x^n} J_n(x^n,t) - \oint_{\partial\Omega} J_{n+1,n+1}(x^n,y,t) \,dS(y).
  \label{eq:reducedSmoluchowski-n-current}
\end{align}
Analogous to the proof of Theorem~\ref{thm:smoluchowski}, we obtain for
$i=1,\dots,n$
\begin{align}
  \int J_{N,i}(x^{n},y^{N-n},t)\,dy^{N-n} & = \frac{(N-n)!}{N!} J_{n,i}(x^{n},t) 
\end{align}
and thus obtain for the boundary term in
\eqref{eq:reducedSmoluchowski-n-current}:
\begin{align}
  \oint_{\partial\Omega} J_{n+1,n+1}(x^{n},y,t) \,dS(y) 
  & = \frac{N!}{(N-n)!} \int \oint_{\partial\Omega} J_{N,n+1}(x^{n},y,z^{N-n-1},t) \,dS(y) dz^{N-n-1} =0,
\end{align}
where the last equation holds due to \eqref{eq:boundary}. 
\end{proof}

In the following, we always assume that \eqref{eq:boundary} holds. For
example, this condition is guaranteed both by \eqref{eq:no-flux} or
periodic boundary conditions mentioned above. Thus, we recover the
well-known (reduced) Smoluchowski equation for the one-body density:
\begin{align}
  \begin{aligned}
    \partial_t \rho(x,t)
    &= D\nabla_{x}\left\{ \left[ \nabla_{x} + \beta \nabla_x V(x,t) - \beta R(x,t) \right] \rho(x,t) - \beta \int \rho_{2}(x,y,t) K_2(x,y)\, dy\right\} .
  \end{aligned}
  \label{eq:rho1}
\end{align}
More generally, for the $n$-body densities, we obtain:
\begin{align}
  \begin{aligned}
    {\partial_t \rho_n(x^n,t)}
    &= D\sum_{i=1}^{n}\nabla_{x_i}\bigg\{ \left[ \nabla_{x_i} + \beta \nabla_{x_i} V(x_i,t) - \beta R(x_i,t)\right] \rho_n(x^n,t) \\
    &\qquad - \beta \rho_n(x^n,t) \sum\limits_{\substack{j=1 \\ j\neq i}}^n K_2(x_i,x_j) - \beta \int \rho_{n+1}(x^n,y,t)K_2(x_i,y) \, dy \bigg\}.
  \end{aligned}
  \label{eq:rhon}
\end{align}
Based on \eqref{eq:rho1}, we
define the one-body current $j:\Omega\times [0,\infty) \to \R^d$ as
\begin{align}
  \begin{aligned}
    j(x,t) &:= -D\left[ \nabla_{x} + \beta \nabla_x V(x,t) - \beta R(x,t) \right] \rho(x,t)  + D\beta \int \rho_{2}(x,y,t)K_2(x,y)\, dy,
    \label{eq:onebodycurrent}
  \end{aligned}
\end{align}
so that it obeys the following continuity equation
\begin{align}
  \partial_t \rho(x,t) &= -\nabla_{x}j(x,t).
  \label{eq:continuityRho1}
\end{align}
Our definition \eqref{eq:onebodycurrent} is equivalent to the ensemble
average of a current operator~\cite{schmidt_power_2022}.

We define the normal one-body current $j_{\perp}:\partial\Omega\times
[0,\infty) \to \R$ at the boundary via an extension of $j$ from
\eqref{eq:onebodycurrent} to $\partial\Omega$:
\begin{align}
  \begin{aligned}
  j_{\perp}(x,t) := n(x) j(x,t) 
  =\;& -Dn(x) \left[ \nabla_{x} + \beta \nabla_x V(x,t) - \beta R(x,t) \right] \rho(x,t) \\
  &+ D\beta n(x)\int \rho_{2}(x,y,t)K_2(x,y)\, dy,
  \end{aligned}
  \label{eq:normalflux}
\end{align}
where $n$ denotes the outward unit normal on $\partial \Omega$. By
property ($iii$) of the well behaved solution and by the type of
potential we consider, $j_{\perp}$ is well defined. As before, the
product of vectors is consistently interpreted as a scalar product.

Boundary condition \eqref{eq:no-flux} of a vanishing normal $N$-body flux
also implies a corresponding condition on the one-body current:
\begin{align}
    j_{\perp}(x,t) =0 \quad \text{for all } x\in\partial\Omega, t\geq 0.
    \label{eq:no-one-body-flux}
\end{align}
Intuitively speaking, the normal component of the one-body current field
vanishes at the boundary $\partial\Omega$. It is a common boundary
condition in physics. In the following, any boundary condition that
implies \eqref{eq:no-one-body-flux} is called a \textit{no-flux boundary
condition}.


\section{Uniqueness theorems}
\label{sec:uniqueness}

We now turn to the central question of this paper. Given an initial 
condition $P_N^{(0)}$, does $\rho$ uniquely specify $V$, i.e., is
a mapping from $\rho$ to $V$ well defined?

There are two obvious limitations to uniqueness. First, the mapping can
only be unique for $x\in\Omega$ where and when $\rho(x,t) > 0$.
Variations in $V$ outside the support of $\rho$, i.e., within the
complement of the set $\textrm{supp} (\rho) :=\{(x,t)\in
\Omega\times[0,\infty):\rho(x,t)>0\}$, do not change the time evolution
of the system as determined by the Smoluchowski equation. Secondly,
adding a time-dependent constant to the potential does not change the
time evolution either. 
We can combine both limitations in a single statement, i.e., if the
difference of the potentials is a time-dependent constant on the support
of $\rho$, then it has no effect on the density. 

\begin{definition}\label{def:diffequiv}
  Two external potentials $V$ and $V'$ are said to be \textit{diffusion
  equivalent}, $V\sim V'$, for a given one-body density $\rho$ if the
  difference $d_V:\Omega\times[0,\infty)\to \R ; (x,t)\mapsto
  d_V(x,t):=V(x,t)-V'(x,t)$ is only a function of time on
  $\mathrm{supp}(\rho)$. 
\end{definition}

This definition allows us to formulate our strategy of proof more
specifically. In the following, we consider two systems with densities
$\rho$ and $\rho'$ and with external potentials $V$ and $V'$. Both
systems start from the same initial condition $P_N^{(0)}$, and they have
the same boundary conditions, pair potentials, and non-conservative
forces.

Our aim is to derive conditions for which an equivalence of $\rho$ and
$\rho'$ implies diffusion equivalence of $V$ and $V'$, or in other words
that the difference
\begin{align}
  d_V(x,t) & := V(x,t) - V'(x,t)
  \label{eq:def_d}
\end{align}
is diffusion equivalent to a function that is identically zero.
For convenience, we will actually show the contrapositive. If the two
potentials are not diffusion equivalent, then the densities must differ,
and thus $\rho$ uniquely determines $V$.

A key step in the proof is to reduce the question of a unique mapping to
that of the unique solvability of a (semi-)elliptic PDE. As discussed in
the introduction, this approach is partly similar to the argument by
Runge and Gross (or Chan and
Finken)~\cite{runge_density-functional_1984, chan_time-dependent_2005},
but it differs in that we use the hierarchy of reduced Smoluchowski
equations from Theorem~\ref{thm:smoluchowski}. We also pay close
attention to a rigorous treatment of the boundary terms. Additionally,
we prove that a no-flux boundary condition always implies uniqueness. 

The main advantage of our strategy is a physically intuitive proof that
helps to clarify the essential physical questions. This intuition comes
at the price of two additional assumptions that could possibly be
avoided  by alternative methods, like a fix-point scheme that has
already been employed in the quantum
case~\cite{ruggenthaler_global_2011, ruggenthaler_existence_2015}. We
summarize these two assumptions via the following definition.

\begin{definition}
  \label{def:analyticity}
  We call a many-body system \textit{analytically accessible} if the
  following two conditions hold:
  \begin{enumerate}
    \item[(A1)] the external potential $V$ is real analytic in time for
      $t\geq 0$ and for all $k\in\N_0$, $\partial^k_t V\in
      C^2(\Omega\times[0,\infty))$ and $\partial^k_t V(\cdot,t=0)$ is
      continuously extensible to $\partial\Omega$;
    \item[(A2)] the $n$-body densities $\rho_n$ for all $n=1,2,\dots N$
      and non-conservative one-body forces $R$ are infinitely often
      differentiable in time from the right at $t= 0$ and for all $k\in
      \N_0$, $\partial_t^k\rho_n\in C^2(\Omega^n\times[0,\infty))$ is
      bounded and $\partial_t^kR\in C^1(\Omega\times[0,\infty))$.
  \end{enumerate}
\end{definition}

More specifically, concerning condition (A1), by including the start
time $t=0$, we assume that the potentials are right differentiable; and
since for each $x\in\Omega$, the functions $V_x:[0,\infty)\to\R,
t\mapsto V(x,t)$ and $V'_x:[0,\infty)\to\R, t\mapsto V'(x,t)$ are real
analytic, the corresponding Taylor series at the origin converge in some
neighborhood of $0$. Hence, the derivatives at the origin uniquely
specify the potential at all times (according to the identity theorem
for analytic functions). With regards to condition (A2), note that we do
not require $\rho_n$ and $\rho_n'$ to be time analytic. 

\begin{proposition}\label{prop:positive_P}
    The $N$-body probability density $P_N(x^N,t) > 0$ for
    $x^N\in\Omega^N$ and $t>0$.
\end{proposition}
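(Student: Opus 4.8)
The plan is to view the $N$-body Smoluchowski equation~\eqref{eq:smoluN} as a uniformly parabolic linear PDE for $P_N$ and to apply the parabolic strong maximum (minimum) principle. First I would put~\eqref{eq:smoluN} into non-divergence form,
\begin{align}
  \partial_t P_N
  = D\sum_{i=1}^N\nabla_{x_i}^2 P_N
   - D\beta\,F_N(x^N,t)\cdot\nabla_{x^N} P_N
   - D\beta\,\bigl(\nabla_{x^N} F_N(x^N,t)\bigr)\,P_N ,
\end{align}
and note that, under our standing assumptions, the principal part is uniformly elliptic ($D>0$) while the drift coefficient $F_N$ and the zeroth-order coefficient $\nabla_{x^N} F_N$ are continuous on $\Omega^N\times[0,\infty)$, hence bounded on every compact subset; moreover $P_N$ is a classical solution because it is well behaved, and $P_N\ge 0$ by construction.

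Since the zeroth-order coefficient $-D\beta\,\nabla_{x^N}F_N$ has no definite sign, I would fix a time slab $\Omega^N\times[0,T]$ and pass to $W:=e^{-\lambda t}P_N\ge 0$, which solves a parabolic equation of the same type with zeroth-order coefficient $-\bigl[D\beta\,\nabla_{x^N}F_N+\lambda\bigr]$; choosing $\lambda$ large enough on the compact sets that enter the argument makes this coefficient nonpositive, and $W>0$ is equivalent to $P_N>0$. I would then argue by contradiction. Suppose $P_N(x_0,t_0)=0$ for some $x_0\in\Omega^N$ and some $t_0>0$. Then $W$ attains its minimum value $0$ at the interior point $(x_0,t_0)$, and the parabolic strong minimum principle (Nirenberg; see, e.g., Protter--Weinberger or Lieberman) forces $W\equiv 0$ on $\Omega^N\times(0,t_0]$: the zero set of $W$ within that slab is relatively closed by continuity and relatively open because the strong minimum principle, applied on small cylinders compactly contained in $\Omega^N\times(0,t_0]$ where all coefficients are bounded, propagates the minimum backward in time and across each time slice; and $\Omega^N\times(0,t_0]$ is connected because $\Omega$, and therefore $\Omega^N$, is connected.

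By property ($i$) of a well-behaved solution, $P_N$ is continuous up to $t=0$, so $P_N^{(0)}(x)=P_N(x,0)=\lim_{t\to 0^+}e^{\lambda t}W(x,t)=0$ for every $x\in\Omega^N$, which contradicts the normalization $\int P_N^{(0)}(x^N)\,dx^N=1$ in~\eqref{eq:norm}. Hence no such $(x_0,t_0)$ exists and $P_N(x^N,t)>0$ for all $x^N\in\Omega^N$, $t>0$.

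The one point requiring care is that we only have \emph{local} bounds on the coefficients — e.g.\ $\sum_i\nabla_{x_i}^2 V$ need not be bounded near $\partial\Omega$ — so I would deliberately invoke the strong maximum principle in its local form, whose proof via the Hopf lemma uses only balls compactly contained in $\Omega^N$; connectedness of $\Omega^N$ then carries positivity across the whole domain without any control of $P_N$ or its derivatives at the spatial boundary. An alternative, avoiding the substitution, is to represent $P_N$ through the positivity-improving parabolic evolution with Gaussian lower bounds on its fundamental solution, but the maximum-principle argument is more direct.
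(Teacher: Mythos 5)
Your proposal is correct in substance but uses a different classical PDE tool than the paper. The paper fixes a connected domain $D\Subset\Omega$ with $x_0^N,x_1^N\in D^N$ (where $P_N^{(0)}(x_1^N)>0$) and applies the parabolic Harnack inequality (Evans, Thm.~10, Ch.~7.1) on $D^N$: for $0<t<t_0$, $\sup_{D^N}P_N(\cdot,t)\leq C\,\inf_{D^N}P_N(\cdot,t_0)=0$, which forces $P_N(x_1^N,t)=0$ and then, by continuity, $P_N^{(0)}(x_1^N)=0$ — a contradiction. You instead invoke the strong minimum principle. Both are standard parabolic positivity results and both are applied locally, precisely because the zeroth-order coefficient $D\beta\,\nabla_{x^N}F_N$ involves $\nabla^2 V$, which need not be bounded up to $\partial\Omega$; you correctly flagged this as the delicate point. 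Harnack's inequality is the more quantitative statement and gives the conclusion in one shot on the fixed compact cylinder $D^N\times(0,t_0]$, while the strong minimum principle requires a propagation argument.

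One small technical lapse: you claim the zero set of $W$ is ``relatively open'' in $\Omega^N\times(0,t_0]$, which is not literally true — the strong minimum principle propagates the zero only sideways and \emph{backward} in time, so for $(x,t)\in Z$ with $t<t_0$ you control $W$ on $B(x,r)\times(t-\delta,t]$ but not on $B(x,r)\times(t,t+\delta)$. The argument is nonetheless salvageable: either run the open-closed argument slice-by-slice in the spatial variable for each fixed $t$ and propagate $t$ downward by a supremum/continuity argument, or — cleaner and closer to the paper — skip the global propagation altogether by invoking the strong minimum principle on a single compact connected cylinder $D^N\times(0,t_0]$ that already contains both $x_0^N$ and a point $x_1^N$ with $P_N^{(0)}(x_1^N)>0$. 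Also note that the exponential substitution $W=e^{-\lambda t}P_N$ is unnecessary here: since $P_N\geq 0$ and the interior extremum in question is a zero minimum, the zeroth-order term $cP_N$ vanishes at that point regardless of the sign of $c$, so the version of the strong minimum principle without a sign assumption on $c$ applies directly.
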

\begin{proof}
    Due to \eqref{eq:norm} for $t=0$, $\int
    P_N^{(0)}(x^N)\,dx^N = 1$ and since $P_N^{(0)}(x^N) \geq 0$, the
    support of $P_N^{(0)}$ is not empty and there exist a
    $x_1^N\in\Omega^N$ with $P_N^{(0)}(x_1^N) > 0$. Suppose
    $P_N(x^N_0,t_0) = 0$ for one $x^N_0\in\Omega^N$ and $t_0>0$. Since
    $\Omega$ is open and connected, there is a connected domain $D$
    compactly contained in $\Omega$ ($\bar{D} \subset\Omega$) with
    $x^N_0,x^N_1\in D^N$. From $P_N(x^N_0,t_0) = 0$ follows $\inf_{x^N\in
    D^N} P_N(x^N,t_0) = 0$.

    According to Theorem~10~in~\cite[Chapter~7.1]{evans_partial_2010}, 
    we obtain for $0<t<t_0$ that
    \begin{align}
        \sup_{x^N\in D^N} P_N(x^N,t) \leq C \inf_{x^N\in D^N} P_N(x^N,t_0) = 0
        \label{eq:Harnack}
    \end{align}
    with a constant $C$ that depends on $D^N$, $t$, $t_0$, and the
    coefficients of $\hat{\mathcal{O}}_N$.
    Theorem~10~in~\cite[Chapter~7.1]{evans_partial_2010} assumes
    continuous coefficients of the parabolic differential equation,
    i.e., the assertion only holds for potentials $V\in C^2(\Omega)$, $R
    \in C(\Omega)$, and $K_2 \in C(\Omega^2\times [0,\infty))$, and
    diverging potentials are excluded; all of which is in agreement with
    our assumptions in Sec.~\ref{sec:setting}. 

    By \eqref{eq:Harnack}, $P_N(x^N,t)=0$ for all $x^N\in D^N$ and
    $0<t<t_0$ and specifically $P_N(x_1^N,t)=0$.
    Since $P_N\in C(\Omega^N\times [0,\infty))$, we have
    $P_N(x_1^N,0)=P_N^{(0)}(x_1^N)=0$, which is a contradiction to
    $P_N^{(0)}(x_1^N) > 0$. Therefore, $P_N(x^N,t)>0$ and accordingly
    $\rho(x,t)>0$ for all $x^N\in\Omega^N, x\in\Omega$ and $t>0$.
\end{proof}

To simplify our proofs, we assume in the following that
$P_N^{(0)}(x^N)>0$ holds for all $x^N\in\Omega^N$, which we here denote
by $P_N^{(0)}>0$. With Proposition~\ref{prop:positive_P}, this
assumption is innocuous from a physical point of view.

Since we assume that the potentials are analytic, we will consider the
time derivatives of their difference, i.e., we define
$d_V^{(k)}:\Omega\to\R$ for $k\in\N_0$:
\begin{align}
  x\mapsto d_V^{(k)}(x) := \partial_t^k d_V(x,t)\Bigr|_{t=0}.
\end{align}
By the fact that $V$ and $V'$ are analytic in time and
$\textrm{supp}(\rho) = \Omega\times [0,\infty)$, $V$ and $V'$ are
diffusion equivalent if and only if $\nabla_x d_V(x,t) = 0$ for all
$x\in\Omega$ and $t\geq 0$, which is equivalent to $\nabla_x
d_V^{(k)}(x) = 0$ for all $x\in\Omega$ and $k\in\N_0$.
Let $V$ and $V'$ be not diffusion equivalent; then there exists a
smallest non-negative integer, say $l$, for which $\nabla d_V^{(l)}
\not\equiv 0$.

The proof of our theorems rests on the following lemma. Importantly, it
allows an exact treatment of the average $n$-body interaction forces for
all orders of $n$ (via the hierarchy of reduced Smoluchowski equations).

\begin{lemma}\label{lemma}
  Given two analytically-accessible many-body systems with identical
  initial conditions $P_N^{(0)}>0$ and boundary conditions. Let their
  external potentials $V$ and $V'$ be not diffusion equivalent, and
  denote by $l\in\N_0$ the smallest integer for which $\nabla d_V^{(l)}
  \not\equiv 0$. 
  Then
  \begin{align}
    \partial_t^k \left[\rho_n(x^n,t)-\rho_n'(x^n,t)\right]\Bigr|_{t=0} \equiv 0
    \quad \text{for all } n=1,2,\dots N \text{ and } k=0,1,\dots l
    \label{eq:vanish}
  \end{align}
  and
  \begin{align}
    \partial_t^{l+1} \left[\rho(x,t)-\rho'(x,t)\right]\Bigr|_{t=0}  = D\beta \nabla_x \left[ \rho(x,0) \nabla_x d_V^{(l)}(x) \right].
    \label{eq:toshow}
  \end{align}
\end{lemma}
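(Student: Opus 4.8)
The plan is to prove \eqref{eq:vanish} and \eqref{eq:toshow} simultaneously by induction on $k$, exploiting the hierarchy of reduced Smoluchowski equations \eqref{eq:rhon} together with the analyticity assumptions (A1)--(A2). The key observation is that the two systems share the same interparticle interaction $K_2$, the same non-conservative forces $Q$ and $R$, the same boundary conditions (so the boundary terms in Theorem~\ref{thm:smoluchowski} vanish by Proposition~\ref{prop:YBG}), and the same initial datum $P_N^{(0)}$; hence the only difference between the evolution equations for $\rho_n$ and $\rho_n'$ comes from the terms $\beta\nabla_{x_i}\bigl[\nabla_{x_i}V(x_i,t)\rho_n\bigr]$ versus $\beta\nabla_{x_i}\bigl[\nabla_{x_i}V'(x_i,t)\rho_n'\bigr]$. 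Subtracting the two versions of \eqref{eq:rhon} and writing $\Delta_n := \rho_n-\rho_n'$, one gets a linear relation expressing $\partial_t\Delta_n$ in terms of $\Delta_n$, $\Delta_{n+1}$, $\nabla_{x_i}d_V$, and the densities themselves.

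First I would set $k=0$: since both systems start from the same $P_N^{(0)}$, we have $\Delta_n(x^n,0)\equiv 0$ for all $n$, giving the base case of \eqref{eq:vanish}. For the inductive step, suppose $\partial_t^j\Delta_n|_{t=0}\equiv 0$ for all $n$ and all $j\le k-1$ with $k\le l$. Apply $\partial_t^{k-1}$ to the subtracted hierarchy equation and evaluate at $t=0$. Using the Leibniz rule, every term on the right-hand side is a sum of products in which at least one factor is some $\partial_t^j\Delta_m|_{t=0}$ with $j\le k-1$ (hence zero by the inductive hypothesis), \emph{except} the terms involving the potential difference, which carry factors $\partial_t^{j}\nabla_x d_V|_{t=0} = \nabla_x d_V^{(j)}$ multiplied by $\partial_t^{k-1-j}\rho_n|_{t=0}$ (not $\Delta$). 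But by the definition of $l$ as the smallest integer with $\nabla d_V^{(l)}\not\equiv 0$, we have $\nabla d_V^{(j)}\equiv 0$ for all $j\le l-1$, so for $k-1\le l-1$, i.e. $k\le l$, all these potential terms also vanish. Hence $\partial_t^k\Delta_n|_{t=0}\equiv 0$, completing the induction and establishing \eqref{eq:vanish}.

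For \eqref{eq:toshow} I would take $k=l+1$ and specialize to $n=1$, using \eqref{eq:rho1}. Apply $\partial_t^l$ to the difference of the two one-body equations and evaluate at $t=0$. As before, by the Leibniz rule and \eqref{eq:vanish} (now known for all $n$ up to order $l$), every term that contains a factor $\partial_t^j\Delta_n|_{t=0}$ with $j\le l$ vanishes; in particular the interaction term $\beta\int(\rho_2-\rho_2')K_2\,dy$ and all its time derivatives up to order $l$ drop out, and the diffusion term $D\nabla_x^2\Delta_1$ drops out. The non-conservative force $R$ is identical in both systems, so $\beta\nabla_x(R\,\Delta_1)$ and its derivatives vanish too. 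The only surviving contribution is from $D\beta\nabla_x[\nabla_x V\,\rho - \nabla_x V'\,\rho']$; writing $\nabla_x V\,\rho-\nabla_x V'\,\rho' = (\nabla_x d_V)\rho + \nabla_x V'\,\Delta_1$ and applying $\partial_t^l|_{t=0}$, the $\nabla_x V'\,\Delta_1$ piece contributes only terms with $\partial_t^j\Delta_1|_{t=0}$, $j\le l$, all zero, while the $(\nabla_x d_V)\rho$ piece yields $\sum_{j=0}^{l}\binom{l}{j}\nabla_x d_V^{(j)}\,\partial_t^{l-j}\rho|_{t=0}$, of which only the $j=l$ term survives (again since $\nabla_x d_V^{(j)}\equiv 0$ for $j<l$), giving exactly $D\beta\nabla_x[\rho(x,0)\nabla_x d_V^{(l)}(x)]$.

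The main obstacle I anticipate is purely bookkeeping rather than conceptual: justifying the interchange of $\partial_t^k$ with the spatial derivatives and the $y$-integration in \eqref{eq:rhon}, and making sure the Leibniz expansions are organized so that it is transparent which factor carries a $\Delta$ and which carries a $d_V$-derivative. This is where the regularity and boundedness built into (A2) — that each $\partial_t^k\rho_n$ lies in $C^2$ and is bounded, so that dominated convergence and Fubini apply to the $\int\rho_{n+1}K_2\,dy$ term with $K_2$ bounded and $C^1$ — does the real work; I would state these interchanges explicitly but not belabor the estimates. A secondary subtlety is that \eqref{eq:vanish} must be proved for \emph{all} $n\le N$ in lockstep (the $n$-body equation couples to $\rho_{n+1}$), which is why the induction is on $k$ with the $n$-range handled uniformly at each stage, and why the terminal case $n=N$ uses $E_{N,i}\equiv 0$ and requires no $\rho_{N+1}$.
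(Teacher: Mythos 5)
Your proposal is correct and follows essentially the same route as the paper: induction on the time-derivative order $k$, with \eqref{eq:vanish} proved uniformly in $n$ by subtracting the hierarchy \eqref{eq:rhon}, applying Leibniz, and invoking $\nabla d_V^{(j)}\equiv0$ for $j<l$, followed by specializing to $n=1$ at order $l+1$. The only cosmetic difference is your decomposition $\nabla V\,\rho-\nabla V'\,\rho'=(\nabla d_V)\rho+\nabla V'\,\Delta$, whereas the paper applies Leibniz first and then uses the induction hypothesis to collapse the two sums into one involving $d_V^{(m-k)}$; both lead to the same terms.
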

\begin{proof}
  We first prove \eqref{eq:vanish} by an induction-like argument. This
  equation obviously holds for $k=0$ by \eqref{eq:rho_n} because both
  many-body systems start from the same initial condition $P_N^{(0)}$.
  In the case $l>0$, assume that \eqref{eq:vanish} holds for all
  $k=0,1,\dots m$ for some $m < l$. We must now verify that
  \eqref{eq:vanish} also holds for $k=m+1$. Therefore, we subtract the
  reduced Smoluchowski equations \eqref{eq:rhon} for the $n$-body
  densities of the two many-body systems, take~$m$ additional time
  derivatives and evaluate the derivatives at $t=0$:
  \begin{subequations}
  \begin{align}
      \partial_t^{m+1}& \left[\right.\rho_n(x^n,t)-\left.\rho_n'(x^n,t)\right]\Bigr|_{t=0} 
      = D\sum_{i=1}^{n}\nabla_{x_i}^2 \partial_t^m\left[\rho_n(x^n,t)-\rho_n'(x^n,t)\right]\Bigr|_{t=0} \label{eq:dropout-1} \\
      & + D\beta \sum_{i=1}^{n}\nabla_{x_i} \partial_t^m\left[\rho_n(x^n,t)\nabla_{x_i} V(x_i,t)-\rho_n'(x^n,t)\nabla_{x_i} V'(x_i,t)\right]\Bigr|_{t=0} \label{eq:dropout-2}  \\
      & - D\beta \sum_{i=1}^{n}\nabla_{x_i} \partial_t^m\left[\rho_n(x^n,t)-\rho_n'(x^n,t)\right] R(x_i,t) \Bigr|_{t=0} \label{eq:dropout-3} \\
      & - D\beta \sum_{\substack{i,j=1\\ j\neq i}}^{n}\nabla_{x_i} \left\{\partial_t^m\left[\rho_n(x^n,t)-\rho_n'(x^n,t)\right]\Bigr|_{t=0} K_2(x_i,x_j)\right\} \label{eq:dropout-5} \\
      & - D\beta \sum_{i=1}^{n}\nabla_{x_i} \int \partial_t^m\left[ \rho_{n+1}(x^n,y,t) - \rho_{n+1}'(x^n,y,t)\right]\Bigr|_{t=0}K_2(x_i,y)\, dy . \label{eq:dropout-7}
  \end{align}
  \end{subequations}
  The terms \eqref{eq:dropout-1} on the right-hand side,
  \eqref{eq:dropout-5} and \eqref{eq:dropout-7} vanish directly since
  \eqref{eq:vanish} holds for $k=m$ by our induction hypothesis. For the
  remaining derivative in the second term \eqref{eq:dropout-2}, we have
  \begin{align}
    &\partial_t^m\left[\rho_n(x^n,t)\nabla_{x_i} V(x_i,t)-\rho_n'(x^n,t)\nabla_{x_i} V'(x_i,t)\right]\Bigr|_{t=0} \\
     &\,= \sum_{k=0}^m \binom{m}{k} \left[\partial_t^k\rho_n(x^n,t)\Bigr|_{t=0}\nabla_{x_i} \partial_t^{m-k}V(x_i,t)\Bigr|_{t=0}-\partial_t^k\rho_n'(x^n,t)\Bigr|_{t=0}\nabla_{x_i}\partial_t^{m-k} V'(x_i,t)\Bigr|_{t=0}\right]   \\
     &\,= \sum_{k=0}^m \binom{m}{k} \left[\partial_t^k\rho_n(x^n,t)\Bigr|_{t=0}\nabla_{x_i} d_V^{(m-k)}(x_i)\right],
     \label{eq:product}
  \end{align}
  where the last equality holds again because of our induction
  hypothesis, i.e., we apply \eqref{eq:vanish} for $k\leq m$.
  Now, since $\nabla d_V^{(k)} \equiv 0$ for all $k\leq m<l$,
  assertion \eqref{eq:vanish} follows for all $k=0,1,\dots l$.
  For \eqref{eq:dropout-3}, we consider 
  \begin{align}
      & \partial_t^m\left[\rho_n(x^n,t)-\rho_n'(x^n,t)\right] R(x_i,t) \Bigr|_{t=0} \\
      &\qquad= \sum_{k=0}^m \binom{m}{k} \partial_t^k \left[\rho_n(x^n,t)-\rho_n'(x^n,t)\right] \Bigr|_{t=0} \partial_t^{m-k} R(x_i,t)\Bigr|_{t=0} \stackrel{\eqref{eq:vanish}}{=} 0 . 
  \end{align} 

  To prove \eqref{eq:toshow}, we subtract the reduced Smoluchowski
  equations \eqref{eq:rho1} for the one-body densities of the two
  many-body systems, take $l$ additional time derivatives and evaluate
  the result at $t=0$:
  \begin{align}
    \begin{aligned}
      \partial_t^{l+1} \left[\rho(x,t)-\rho'(x,t)\right]\Bigr|_{t=0} &= 
      D\,\nabla_{x}^2\partial_t^l\left[\rho(x,t)-\rho'(x,t)\right]\Bigr|_{t=0}\\
      &\hphantom{=}\; +D\beta\,\nabla_{x} \partial_t^l\left[\rho(x,t)\nabla_x V(x,t) - \rho'(x,t)\nabla_x V'(x,t) \right]\Bigr|_{t=0}\\
      &\hphantom{=}\; -D\beta\,\nabla_{x} \partial_t^l\left[\rho(x,t) - \rho'(x,t) \right] R(x,t) \Bigr|_{t=0} \\
      &\hphantom{=}\; -D\beta\,\nabla_{x}\int\partial_t^l\left[\rho_{2}(x,y,t)-\rho_{2}'(x,y,t)\right]\Bigr|_{t=0}K_2(x,y)\, dy.
    \end{aligned}
  \end{align}
  The first and last term on the right-hand side vanish by
  \eqref{eq:vanish} and the third term vanishes like
  \eqref{eq:dropout-3}. Thus, we have
  \begin{align}
    \begin{aligned}
      \partial_t^{l+1} &\left[\rho(x,t)-\rho'(x,t)\right]\Bigr|_{t=0} = 
       D\beta\,\nabla_{x} \partial_t^l\left[\rho(x,t)\nabla_x V(x,t) - \rho'(x,t)\nabla_x V'(x,t) \right]\Bigr|_{t=0}\\
       &=D\beta\,\nabla_{x} \sum_{k=0}^l \binom{l}{k} \left[\partial_t^k\rho(x,t)\Bigr|_{t=0}\nabla_x \partial_t^{l-k}V(x,t)\Bigr|_{t=0} - \partial_t^k\rho'(x,t)\Bigr|_{t=0}\nabla_x \partial_t^{l-k}V'(x,t)\Bigr|_{t=0} \right]\\
       &=D\beta\,\nabla_{x} \sum_{k=0}^l \binom{l}{k} \left[\partial_t^k\rho(x,t)\Bigr|_{t=0}\nabla_x d_V^{(l-k)}(x) \right],
    \end{aligned}
  \end{align}
  where the last equality holds again by \eqref{eq:vanish}.
  Since $\nabla d_V^{(l-k)} \equiv 0$ for all $0<k\leq l$, 
  we have proven \eqref{eq:toshow} which concludes the proof.
\end{proof}

The $n$-body densities $\rho_n$ with $n>1$ are highly relevant for the
correct dynamic evolution of the one-body density $\rho$. The preceding
lemma provides control over these contributions since \eqref{eq:vanish}
holds for all $n\geq 1$ and their derivatives up to order $l$. As an
important consequence, \eqref{eq:toshow} and hence our final results on
uniqueness have no explicit condition on $\rho_n$. Our proof via an
induction-like argument reflects the hierarchical structure of the
reduced Smoluchowski equations for $\rho_n$.

Next, we state and prove our first main Theorem~\ref{thm:surface} with a
general and quite formal condition for uniqueness. We then verify this
condition for no-flux boundary conditions in our central
Theorem~\ref{thm:no-normal-flux}. The general condition of
Theorem~\ref{thm:surface} is still useful to easily adapt the proof to
other boundary conditions as in Corollary~\ref{cor:flux}.

\begin{theorem}\label{thm:surface}
  Given two analytically-accessible many-body systems with identical
  initial conditions $P_N^{(0)}>0$ and boundary conditions. Let their
  external potentials $V$ and $V'$ be not diffusion equivalent, and
  denote by $l\in\N_0$ the smallest integer for which $\nabla d_V^{(l)}
  \not\equiv 0$.
  If
  \begin{align}
    \oint_{\partial\Omega} \rho(x,0) d_V^{(l)}(x) \nabla_x d_V^{(l)}(x) dS(x) = 0 ,
    \label{eq:divergecriterion}
  \end{align}
  then the two resulting densities are different, i.e., $\rho
  \not\equiv\rho'$. 
\end{theorem}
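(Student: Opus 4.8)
The plan is to argue by contradiction, thereby reducing the question to the unique solvability of a (degenerate) elliptic equation for the single function $d_V^{(l)}$. Suppose, contrary to the claim, that $\rho\equiv\rho'$ on $\Omega\times[0,\infty)$. Condition~(A2) of analytic accessibility guarantees that $\partial_t^{l+1}\rho$ and $\partial_t^{l+1}\rho'$ exist (from the right at $t=0$) and are continuous, so $\rho\equiv\rho'$ forces $\partial_t^{l+1}[\rho-\rho']\big|_{t=0}\equiv 0$. Since the hypotheses of Lemma~\ref{lemma} are in force here, identity~\eqref{eq:toshow} applies, and combined with the previous line it gives
\begin{align}
  \nabla_x\cdot\left[\rho(x,0)\,\nabla_x d_V^{(l)}(x)\right] = 0 \qquad\text{for all } x\in\Omega .
  \label{eq:ellipticPDE}
\end{align}
Because we assumed $P_N^{(0)}>0$, equation~\eqref{eq:rho_n} yields $\rho(x,0)>0$ on all of $\Omega$, so~\eqref{eq:ellipticPDE} is a linear equation for $d_V^{(l)}$ that is uniformly elliptic on every subdomain compactly contained in $\Omega$. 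It is worth stressing that all $n$-body contributions with $n>1$ have already been eliminated at this stage: they were absorbed in the proof of Lemma~\ref{lemma} through~\eqref{eq:vanish}, which is exactly why~\eqref{eq:ellipticPDE} carries no residual dependence on $\rho_n$.

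Next I would run the standard energy (Green's first identity) estimate. Multiplying~\eqref{eq:ellipticPDE} by $d_V^{(l)}$, integrating over $\Omega$, and integrating by parts,
\begin{align}
  0 &= \int_\Omega d_V^{(l)}(x)\,\nabla_x\cdot\left[\rho(x,0)\,\nabla_x d_V^{(l)}(x)\right]dx \\
    &= \oint_{\partial\Omega}\rho(x,0)\,d_V^{(l)}(x)\,\nabla_x d_V^{(l)}(x)\,dS(x)
       - \int_\Omega \rho(x,0)\left|\nabla_x d_V^{(l)}(x)\right|^2 dx .
\end{align}
The surface integral is precisely the left-hand side of~\eqref{eq:divergecriterion}, hence it vanishes by hypothesis, and we are left with $\int_\Omega \rho(x,0)\,|\nabla_x d_V^{(l)}(x)|^2\,dx=0$. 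As the integrand is non-negative and $\rho(x,0)>0$ throughout $\Omega$, this forces $\nabla_x d_V^{(l)}\equiv 0$ on $\Omega$, which contradicts the definition of $l$ as the smallest integer with $\nabla d_V^{(l)}\not\equiv 0$. Therefore $\rho\not\equiv\rho'$, as claimed.

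The only genuinely delicate point I foresee is justifying the integration by parts on the open, bounded domain $\Omega$: one must know that $\rho(\cdot,0)$, $d_V^{(l)}$ and $\nabla_x d_V^{(l)}$ extend continuously to $\partial\Omega$ — which is exactly what property~($ii$) of a well-behaved solution and condition~(A1) supply — and that $\nabla_x\cdot[\rho(\cdot,0)\,\nabla_x d_V^{(l)}]$ is integrable on $\Omega$, which is trivial here since it vanishes identically. To be careful with the smooth-boundary hypothesis, the cleanest route is to apply the divergence theorem on an exhaustion of $\Omega$ by smooth subdomains $D_k$ with $\overline{D_k}\subset\Omega$ and $D_k\uparrow\Omega$, and then pass to the limit, using the continuous extensions to identify the boundary contributions on $\partial D_k$ with the surface integral in~\eqref{eq:divergecriterion}. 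Everything else — the formula~\eqref{eq:toshow}, the strict positivity $\rho(\cdot,0)>0$, and the minimality of $l$ — is already in hand.
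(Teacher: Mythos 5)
Your argument is correct and follows essentially the same route as the paper: invoke \eqref{eq:toshow} to reduce the claim to showing $\nabla_x\left[\rho(x,0)\nabla_x d_V^{(l)}(x)\right]\not\equiv 0$, assume the contrary, multiply by $d_V^{(l)}$, integrate by parts, use \eqref{eq:divergecriterion} to drop the surface term, and conclude $\int\rho(x,0)\left|\nabla_x d_V^{(l)}\right|^2\,dx=0$, contradicting the minimality of $l$ since $\rho(\cdot,0)>0$. The only cosmetic difference is that you frame the whole thing as a contradiction from $\rho\equiv\rho'$ rather than proving \eqref{eq:elliptic-1} directly, and you add an exhaustion-by-subdomains remark to justify the divergence theorem; neither changes the substance.
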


\begin{proof}
  By condition (A2), all time derivatives of $\rho$ and $\rho'$ exist. Thus if 
  \begin{align}
    \partial_t^{k} \left[\rho(x,t)-\rho'(x,t)\right]\Bigr|_{t=0} \not\equiv 0,
    \label{eq:somek}
  \end{align}
  for some $k\in\N$, then the densities are different.
  Consider the case $k=l+1$.
  According to \eqref{eq:toshow} from Lemma~\ref{lemma}, it suffices to show that
  \begin{align}
    \nabla_x \left[ \rho(x,0) \nabla_x d_V^{(l)}(x) \right] \not\equiv 0.
    \label{eq:elliptic-1}
  \end{align}
  Suppose, on the contrary, that $d_V^{(l)}(x)\not\equiv 0$ is a non-trivial solution
  of the PDE:
  \begin{align}
    \nabla_x \left[ \rho(x,0) \nabla_x d_V^{(l)}(x) \right] \equiv 0 .
    \label{eq:elliptic}
  \end{align}
  Now, consider the following integral
  \begin{align}
    \int d_V^{(l)}(x) \nabla_x &\left[ \rho(x,0) \nabla_x d_V^{(l)}(x) \right] \,dx \\
    =\; &  
    -\int  \rho(x,0) \left[ \nabla_x d_V^{(l)}(x) \right]^2 \,dx
    + \oint_{\partial\Omega} \rho(x,0) d_V^{(l)}(x) \nabla_x d_V^{(l)}(x) \,dS(x)
    \label{eq:trick}
  \end{align}
  using partial integration. By our assumption, the surface term vanishes,
  and we obtain
  \begin{align}
    \int d_V^{(l)}(x) \nabla_x \left[ \rho(x,0) \nabla_x d_V^{(l)}(x) \right]  \,dx
    &=
    -\int  \rho(x,0) \left[ \nabla_x d_V^{(l)}(x) \right]^2 \,dx.
  \end{align}
  Since $\nabla d_V^{(l)} \not\equiv 0$ and $\rho(x,0)>0$, the right-hand side is strictly negative.
  Therefore, the integrand on the left-hand side cannot vanish for all
  $x\in\Omega$, which in turn implies \eqref{eq:elliptic-1}. 
  
\end{proof}

From Theorem~\ref{thm:surface}, we can distinguish different cases of
uniqueness based on the behavior of $\rho$ close to the boundary. In
fact, condition~\eqref{eq:divergecriterion} could allow for diverging
densities if the gradient of the potential difference vanishes fast
enough. We will discuss such cases in the next section.

Now, we consider the physically-intuitive no-flux boundary condition
from \eqref{eq:no-one-body-flux} and show that it guarantees condition
\eqref{eq:divergecriterion} from Theorem~\ref{thm:surface} for any $V$
and $V'$ which are not diffusion equivalent. This proof allows for an
explicit uniqueness theorem.

\begin{theorem}\label{thm:no-normal-flux} 
  For an analytically-accessible many-body system with initial condition 
  $P_N^{(0)}>0$ and no normal flux $j_{\perp}$ at the boundary $\partial\Omega$, 
  the density $\rho$ uniquely determines the external potential $V$ (up to 
  diffusion equivalence). 
\end{theorem}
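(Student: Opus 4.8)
The plan is to deduce the result from Theorem~\ref{thm:surface} by checking that its hypothesis \eqref{eq:divergecriterion} is automatically satisfied under a no-flux boundary condition. As in the rest of this section I argue by the contrapositive: suppose $V$ and $V'$ are not diffusion equivalent and let $l\in\N_0$ be the smallest integer with $\nabla d_V^{(l)}\not\equiv0$ (such $l$ exists by the discussion preceding Lemma~\ref{lemma}). It then suffices to show that
$\rho(x,0)\,\bigl[\nabla_x d_V^{(l)}(x)\cdot n(x)\bigr]=0$ for every $x\in\partial\Omega$; indeed, this makes the integrand of \eqref{eq:divergecriterion} vanish pointwise on $\partial\Omega$, so Theorem~\ref{thm:surface} yields $\rho\not\equiv\rho'$, which is exactly the asserted uniqueness (up to diffusion equivalence).

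The input I would use is that \emph{both} systems satisfy \eqref{eq:no-one-body-flux}, hence $j_{\perp}(x,t)=j_{\perp}'(x,t)=0$ for all $x\in\partial\Omega$ and $t\ge0$, and therefore $\partial_t^l\bigl[j_{\perp}-j_{\perp}'\bigr]\Bigr|_{t=0}\equiv0$ on $\partial\Omega$. Next I would compute $\partial_t^l\bigl[j-j'\bigr]\Bigr|_{t=0}$ from the explicit one-body current \eqref{eq:onebodycurrent}: subtract the two currents, apply $\partial_t^l$, expand each product by the Leibniz rule, and evaluate at $t=0$. By Lemma~\ref{lemma}, i.e.\ \eqref{eq:vanish} applied with $n=1,2$ and all $k\le l$, the diffusive term $-D\nabla_x[\rho-\rho']$, the non-conservative term $D\beta[\rho-\rho']R$, and the interaction term $D\beta\int[\rho_2-\rho_2']\,K_2\,dy$ all drop out, exactly as their counterparts did in the proof of Lemma~\ref{lemma}. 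Only the potential term survives, and the same Leibniz expansion that produced \eqref{eq:product} gives
\begin{align}
  \partial_t^l\bigl[\rho\,\nabla_x V-\rho'\,\nabla_x V'\bigr]\Bigr|_{t=0}
  &=\sum_{k=0}^{l}\binom{l}{k}\,\partial_t^k\rho(x,t)\Bigr|_{t=0}\,\nabla_x d_V^{(l-k)}(x)\\
  &=\rho(x,0)\,\nabla_x d_V^{(l)}(x),
\end{align}
the last step using minimality of $l$ (so $\nabla d_V^{(l-k)}\equiv0$ for $k\ge1$). Hence $\partial_t^l\bigl[j-j'\bigr]\Bigr|_{t=0}=-D\beta\,\rho(x,0)\,\nabla_x d_V^{(l)}(x)$ on all of $\Omega$.

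Restricting this identity to $\partial\Omega$ and taking the scalar product with the outward normal $n(x)$, and combining with $\partial_t^l[j_{\perp}-j_{\perp}']\big|_{t=0}=0$, I obtain $-D\beta\,\rho(x,0)\,[\nabla_x d_V^{(l)}(x)\cdot n(x)]=0$, i.e.\ $\rho(x,0)\,[\nabla_x d_V^{(l)}(x)\cdot n(x)]=0$ for all $x\in\partial\Omega$, which closes the argument via Theorem~\ref{thm:surface}. I expect the only genuinely delicate point to be this last passage to the boundary: one must justify that the no-flux condition, which constrains the current at all times, may be differentiated term by term in $t$ at $t=0$ and then restricted to $\partial\Omega$, i.e.\ that $\partial_t^l$ commutes with the continuous extension of $j$ to $\partial\Omega$ (equivalently, of $\nabla_x\partial_t^k V$, of $\rho_n$, and of $R$). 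This is precisely where the analytic-accessibility hypotheses (A1)--(A2) together with the continuous-extensibility assumptions of Sec.~\ref{sec:setting} enter; everything else is bookkeeping with the Leibniz rule and Lemma~\ref{lemma}.
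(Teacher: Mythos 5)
Your proof is correct and follows essentially the same strategy as the paper: verify \eqref{eq:divergecriterion} by applying $\partial_t^l$ to the no-flux condition and using Lemma~\ref{lemma} with the Leibniz rule to reduce to $\rho(x,0)\,n(x)\cdot\nabla_x d_V^{(l)}(x)=0$ on $\partial\Omega$; you merely compute the full current difference $\partial_t^l[j-j']|_{t=0}$ first (exactly as in Corollary~\ref{cor:flux}) and then project onto the normal, whereas the paper works directly with $j_\perp$ from \eqref{eq:normalflux}. The one step you leave implicit --- that multiplying by $d_V^{(l)}(x)$ then annihilates the full integrand of \eqref{eq:divergecriterion} --- requires, as the paper spells out, that $d_V^{(l)}$ is bounded on $\partial\Omega$, which follows from the continuous extensibility guaranteed by (A1) together with the boundedness of $\Omega$.
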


\begin{proof} As discussed above, we want to show that the
  density--potential mapping is unique, i.e., $\rho\equiv\rho'$ implies
  $V\sim V'$, by proving the contrapositive, i.e., if $V$ and $V'$ are
  not diffusion equivalent, then $\rho(x,t)$ must differ from
  $\rho'(x,t)$ for some $x\in\Omega$ and $t>0$. 
  
  Let $V\not\sim V'$ and $l$ denote the smallest integer for which
  $\nabla d_V^{(l)} \not\equiv 0$, as in Theorem~\ref{thm:surface}. To
  show that $\rho\not\equiv\rho'$, we have to show that condition
  \eqref{eq:divergecriterion} holds. Since the normal flux is zero for
  the two systems, subtracting \eqref{eq:normalflux} yields
  \begin{align}
    -Dn(x) \nabla_{x}[\rho(x,t)-\rho'(x,t)] &-D \beta n(x)[\rho(x,t)\nabla_x V(x,t) -\rho'(x,t)\nabla_x V'(x,t)] \\
    & +D\beta n(x)[\rho(x,t) -\rho'(x,t)] R(x,t) \\
    & +D\beta n(x)\int [\rho_{2}(x,y,t)-\rho_{2}'(x,y,t)]K_2(x,y) \,dy = 0
  \end{align}
  for all $x\in\partial\Omega$.
  By applying $l$ subsequent time derivatives and \eqref{eq:vanish}, we get
  \begin{align}
    n(x)\partial_t^l[\rho(x,t)\nabla_x V(x,t) -\rho'(x,t)\nabla_x V'(x,t)]\Bigr|_{t=0} = 0,
  \end{align}
  and 
  \begin{align}
    n(x)\rho(x,0)\nabla_x d_V^{(l)}(x) \equiv 0
    \label{eq:boundary_1}
  \end{align}
  follows by the same argument as in \eqref{eq:product}.
  
  Since $\partial^l_t V(x,t)\bigr|_{t=0}$ and $\partial^l_t
  V'(x,t)\bigr|_{t=0}$ are continuously extensible to $\partial\Omega$,
  and since $\Omega$ is bounded, $d_V^{(l)}$ is bounded on
  $\partial\Omega$. Thus, we have obtained a weaker property:
  \begin{align}
    n(x)\rho(x,0)d_V^{(l)}(x)\nabla_x d_V^{(l)}(x) = 0
    \label{eq:boundary_2}
  \end{align}
  for all $x\in\partial\Omega$. Therefore the integrand of equation 
  \eqref{eq:divergecriterion} vanishes, which proves the assertion.
\end{proof}

\begin{remark}\label{remark:boundarycondition} 
  Theorem~\ref{thm:no-normal-flux} can be readily extended to other
  boundary conditions that prescribe the normal flux at $\partial
  \Omega$, provided that the boundary term in
  \eqref{eq:reducedSmoluchowski} remains zero. However, when applying
  such conditions to our general $N$-body framework, where each particle
  is conserved according to the Smoluchowski equation, one must
  carefully verify that these conditions are physically meaningful and
  well-defined. A prominent example for simulations could be periodic
  boundary conditions.
\end{remark}

Theorem~\ref{thm:no-normal-flux} together with Remark
\ref{remark:boundarycondition} capture a generic case where the density
and the normal flux at the boundary together uniquely specify the
external potential and hence all higher-order correlations. This
assertion is consistent with the PFT framework, where the density and
current together yield a complete statistical description of a
time-dependent many-body system~\cite{schmidt_power_2013,
schmidt_power_2022}. However, our result is less restrictive since we
only need to fix the normal flux at the boundary. In fact, an
alternative formulation of our uniqueness theorem is possible, which
only involves the current without further assumption on the boundary
condition.

\begin{corollary}\label{cor:flux} 
  For an analytically-accessible many-body system with initial condition
  $P_N^{(0)} >0$, the one-body current $j$ uniquely determines the
  external potential $V$ (up to diffusion equivalence). 
\end{corollary}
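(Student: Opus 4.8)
The plan is to derive Corollary~\ref{cor:flux} directly from Theorem~\ref{thm:surface}, using the single observation that equality of the two one-body currents does two jobs at once: it forces the two densities to coincide, and it forces their normal boundary traces to coincide, which is exactly the input the proof of Theorem~\ref{thm:no-normal-flux} uses to verify condition~\eqref{eq:divergecriterion}. Concretely, consider two analytically-accessible systems with the same initial condition $P_N^{(0)}>0$, the same pair interaction $K_2$, the same non-conservative one-body force $R$, and the same boundary condition, and suppose their one-body currents agree, $j\equiv j'$. As always, I would argue by contraposition: assume the external potentials are \emph{not} diffusion equivalent and derive a contradiction, so that in fact $V\sim V'$.

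First I would show $\rho\equiv\rho'$. Since both $N$-body densities equal $P_N^{(0)}$ at $t=0$, formula~\eqref{eq:rho_n} with $n=1$ gives $\rho(\cdot,0)=\rho'(\cdot,0)$; and the continuity equation~\eqref{eq:continuityRho1}, which holds for each system separately, yields $\partial_t[\rho(x,t)-\rho'(x,t)] = -\nabla_x[j(x,t)-j'(x,t)] \equiv 0$, so $\rho\equiv\rho'$ on $\Omega\times[0,\infty)$. By Proposition~\ref{prop:positive_P} we also have $\rho>0$ there, so $\mathrm{supp}(\rho)=\Omega\times[0,\infty)$ and diffusion equivalence of $V$ and $V'$ is equivalent to $\nabla_x d_V\equiv 0$. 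Now let $l\in\N_0$ be the smallest integer with $\nabla d_V^{(l)}\not\equiv 0$, as furnished by Lemma~\ref{lemma} once we assume $V\not\sim V'$.

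Next I would transmit the current equality to the boundary. Since $j\equiv j'$ on $\Omega$ and, by property~($iii$) of the well-behaved solution together with the assumed regularity of $V$, both currents extend continuously to $\partial\Omega$ (as noted after~\eqref{eq:normalflux}), their normal traces agree: $j_{\perp}\equiv j_{\perp}'$ on $\partial\Omega$. This equality $j_\perp-j_\perp'=0$ is precisely the only property of the no-flux condition actually used in the proof of Theorem~\ref{thm:no-normal-flux}: subtracting the two instances of~\eqref{eq:normalflux}, applying $l$ time derivatives, and invoking~\eqref{eq:vanish} from Lemma~\ref{lemma} (for the $\rho$, the $\rho R$, and the $\int\rho_2 K_2$ contributions), I obtain $n(x)\,\rho(x,0)\,\nabla_x d_V^{(l)}(x)\equiv 0$ on $\partial\Omega$; since $d_V^{(l)}$ is bounded on $\partial\Omega$, the integrand of~\eqref{eq:divergecriterion} vanishes. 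Thus condition~\eqref{eq:divergecriterion} of Theorem~\ref{thm:surface} is satisfied, and that theorem gives $\rho\not\equiv\rho'$, contradicting $\rho\equiv\rho'$. Hence $V\sim V'$, which is the assertion.

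I do not expect a genuine obstacle: the corollary is essentially a repackaging of Theorems~\ref{thm:surface} and~\ref{thm:no-normal-flux}. The one point meriting a little care is that equality of $j$ on the open set $\Omega$ transmits to equality of the boundary trace $j_\perp$ — i.e., that $j_\perp$ is well defined and continuous up to $\partial\Omega$ under the stated hypotheses — and, relatedly, that the proof of Theorem~\ref{thm:no-normal-flux} uses only the \emph{difference} $j_\perp-j_\perp'$, not the vanishing of $j_\perp$ itself, so that no boundary condition whatsoever need be imposed here. Once that is observed, no new estimates are needed and the argument is immediate.
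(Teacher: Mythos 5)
Your argument is correct, but it takes a noticeably different (and longer) route than the paper's. The paper proves the corollary \emph{directly}: it subtracts the explicit current formula~\eqref{eq:onebodycurrent} for the two systems, takes $l$ time derivatives at $t=0$, and uses Lemma~\ref{lemma} together with $\rho(\cdot,0)>0$ to collapse the result to $\partial_t^l[j-j']|_{t=0}=-D\beta\,\rho(x,0)\,\nabla_x d_V^{(l)}(x)\not\equiv 0$, whence $j\not\equiv j'$. No detour through Theorem~\ref{thm:surface} or the boundary-trace machinery of Theorem~\ref{thm:no-normal-flux} is needed. Your version instead assumes $j\equiv j'$, deduces $\rho\equiv\rho'$ from the continuity equation and equal initial data, transmits the current equality to a boundary trace, replays the manipulation of Theorem~\ref{thm:no-normal-flux} to verify~\eqref{eq:divergecriterion}, and then invokes Theorem~\ref{thm:surface} to reach a contradiction. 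That is a valid chain of reasoning and it is conceptually tidy in the sense that it exhibits the corollary as a consequence of the two preceding theorems; but it routes the conclusion through an unnecessary surface integral and through the full $\rho\equiv\rho'$ argument, whereas the bulk form of the current already carries $\nabla_x d_V^{(l)}$ in it, so the uniqueness of $j$ is essentially immediate once Lemma~\ref{lemma} is in hand. In particular, the paper's computation makes visible the physically meaningful byproduct $j-j'=-D\beta\rho\nabla_x d_V+O(t^{l+1})$, which your indirect argument does not surface.
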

\begin{proof} 
    Consider two potentials $V\not\sim V'$ and $l$ denote the smallest integer for 
    which $\nabla d_V^{(l)} \not\equiv 0$. Now let $\rho$ and $\rho'$ be the densities 
    that result with the external potentials $V$ and $V'$ with initial condition 
    $P_N^{(0)}$ and identical boundary conditions, which satisfying conditions (A1) and (A2). 
    By \eqref{eq:onebodycurrent}, we have
    \begin{align}
        j(x,t) -j'(x,t) =& -D\nabla_x \left[\rho(x,t)-\rho'(x,t)\right] - D\beta \rho(x,t) \nabla_x d_V(x,t) \\ 
        & - D\beta \left[\rho(x,t)-\rho'(x,t)\right] \nabla_x V'(x,t) + D\beta R(x,t) \left[\rho(x,t)-\rho'(x,t)\right] \\
        & + D\beta \int \left[\rho_{2}(x,y,t) - \rho'_{2}(x,y,t)\right]K_2(x,y)\, dy.
        \label{eq:difference_j}
    \end{align}
    Now consider 
    \begin{align}
        \partial_t^l \left[j(x,t)-j'(x,t)\right]\Bigr|_{t=0}  = -D \Bigg[ & \partial_t^l \left[\rho(x,t)-\rho'(x,t)\right]\Bigr|_{t=0} \\
        & + \beta \sum_{k=0}^l \binom{l}{k} \partial_t^{l-k} \rho(x,t)\Bigr|_{t=0} \nabla_x d_V^{(k)}(x) \\
        &  +\beta \sum_{k=0}^l \binom{l}{k} \partial_t^{l-k} \left[\rho(x,t)-\rho'(x,t)\right]\Bigr|_{t=0} \nabla_x \partial_t^k V'(x,t)\Bigr|_{t=0} \\
        &  -\beta \sum_{k=0}^l \binom{l}{k} \partial_t^{l-k} \left[\rho(x,t)-\rho'(x,t)\right]\Bigr|_{t=0} \partial_t^k R(x,t)\Bigr|_{t=0} \\
        &  -\beta \int \partial_t^l \left[\rho_{2}(x,y,t) - \rho'_{2}(x,y,t)\right]\Bigr|_{t=0} K_2(x,y)\, dy \Bigg]
        \label{eq:derivation_j}
    \end{align}
    From Lemma~\ref{lemma} and $\rho(x,0)>0$ follows 
    \begin{align}
        \partial_t^l \left[j(x,t)-j'(x,t)\right]\Bigr|_{t=0} & = -D\beta \rho(x,0) \nabla_x d_V^{(l)} \not\equiv 0 .
        \label{eq:derivation_j_2}
    \end{align}
    Thus, following the proof of Theorem~\ref{thm:surface}, we have
    $j\not\equiv j'$. 
\end{proof}

\begin{remark}
  \label{remark:explictimplicit}
  The general condition for uniqueness in Theorem~\ref{thm:surface}
  depends on the difference $d_V$ of the potentials, as expected from
  the arguments in quantum
  mechanics~\cite{runge_density-functional_1984,
  ruggenthaler_existence_2015}. Intuitively speaking, uniqueness of the
  potential $V$ that realizes $\rho$ is stated (only) in a
  ``neighborhood'' of $V$, i.e., within a space of admissible
  potentials.

  In contrast, Theorem~\ref{thm:no-normal-flux} and
  Corollary~\ref{cor:flux} make no explicit assumption on $d_V$.
  Instead, the conditions on the current (at the boundary) allow for a
  ``global'' statement because they exclude all potentials outside this
  ``neighborhood'' of $V$. (There is, of course, still the additional
  assumption of smooth potentials according to our method of proof
  following Runge and Gross~\cite{runge_density-functional_1984}.)
\end{remark}

Since we assume in condition (A1) in Definition~\ref{def:analyticity}
that the potential and its time derivatives are continuously extensible
to $\partial\Omega$, we exclude potentials that diverge at the boundary
$\partial\Omega$, but recall that hard walls, which are a common choice
for an external potential in physics, are here represented via our
bounded domain $\Omega$. Definition~\ref{def:analyticity} only restricts
any additional (time-dependent) potential within the domain. For
unbounded systems, it will be interesting to relax condition (A1) in
future work (e.g., via a limit for infinite system sizes).
  
Note that in the proof of Theorem~\ref{thm:no-normal-flux}, the only
explicit requirement is that the difference $d_V^{(l)}$ does not diverge
on $\partial\Omega$. Hence, it is promising to consider a larger space
of external potentials, in the sense of
Remark~\ref{remark:explictimplicit}, where the potentials $V$ and $V'$
may diverge at the boundary while their difference $d_V^{(l)}$ remains
finite on $\partial \Omega$ so that uniqueness is still preserved in a
neighborhood of $V$.
Moreover, we expect that our results can also be generalized to
non-integrable density profiles to include such a simple case as the
homogeneous bulk. We illustrate these scenarios in the examples below.

\section{Loopholes to uniqueness}
\label{sec:examples}

Our explicit conditions for uniqueness imply cases where a violation
directly results in non-unique potentials, i.e., a density profile that
can be realized by two different external potentials (which are not
diffusion equivalent). According to Remark~\ref{remark:explictimplicit},
such non-unique potentials have to differ strongly, in the sense that
they violate \eqref{eq:divergecriterion}. Theorem~\ref{thm:surface}
obviously makes no direct statement about how to construct such
potentials. However, the proof structure provides cues on their
existence, as the uniqueness of the density-potential mapping is
essentially equivalent to having only trivial solutions of the elliptic
PDE~\eqref{eq:elliptic}. This insight allows us to specify a
constructive method of non-unique potentials for the ideal gas or, more
generally, under the adiabatic approximation~\cite{marconi_dynamic_1999,
archer_dynamical_2004, te_vrugt_classical_2020}. The latter assumes
that the two-body density $\rho_2(x,y,t)$ is a functional of the
one-body density $\rho(x,t)$ at the same time $t$. The resulting
``loopholes'' can also be illustrated for unbounded systems and
non-integrable densities.

A generic procedure to construct explicit loopholes to uniqueness that
violate \eqref{eq:divergecriterion} follows directly from
\eqref{eq:rho1} and the assumption that $\rho_2$ is a functional of
$\rho$. In that case, \eqref{eq:rho1} is a closed equation in the sense
that it only depends on $\rho$ and not on $\rho_2$, $\rho_3$, \dots~By
taking the difference for two systems with distinct external potentials,
we obtain a PDE similar to \eqref{eq:elliptic} but explicitly containing
$V'$ via $d_V(x,t)$, as defined in \eqref{eq:def_d}: 
\begin{align}
  \nabla_x[\rho(x,t)\nabla_x d_V(x,t)] = 0 \text{ for all } x\in\Omega, t\geq 0.
  \label{eq:nonuniquePDE}
\end{align}
Hence, we can add to the external potential $V$ a nontrivial solution of
\eqref{eq:nonuniquePDE} without changing the density $\rho$. However, by
definition of \eqref{eq:nonuniquePDE}, there will be a divergence-free
difference $j'(x,t) - j(x,t)= -D\beta\rho(x,t)\nabla_x d_V(x,t)$ of the
one-body currents in the two systems.
By construction, this distinct current $j'$ violates our condition for
uniqueness in Theorem~\ref{thm:surface} (and also contradicts our
requirement on the boundary flux in Theorem~\ref{thm:no-normal-flux} and
Remark~\ref{remark:boundarycondition}) so that, consistently, the
external potential can no longer be expected to be unique.

The construction of loopholes according to \eqref{eq:nonuniquePDE} can
be made even more explicit if the density is effectively one-dimensional
(i.e., homogeneous in all coordinates but one). In that case, the
elliptic PDE~\eqref{eq:nonuniquePDE} reduces to a Sturm-Liouville
problem~\cite{walter_gewohnliche_2000}, which can be solved explicitly:
\begin{align}
  d_V(x,t) = \int_{x_0}^{x} \frac{c(t)}{\rho(y,t)}\,dy
\label{eq:algo}
\end{align}
with $x_0\in\Omega$ and $c(t)$ being a purely time-dependent constant.
This additional external potential leads to the new current
$$j'(x,t)=j(x,t)-D\beta c(t).$$ Only $c(t)\equiv0$ results in a
potential $d_V$ that does not violate condition (A1), and in that case,
the additional current vanishes, $-D\beta c(t) = 0$.

The procedure to construct non-unique potentials for an ideal gas
\eqref{eq:algo} is carried out in the following example, inspired
by~\cite{maitra_demonstration_2001}, where we consider the bounded
domain $\Omega = [-\pi/2,\pi/2]\subset\R$. For the rest of this section,
we set all constants to one by our choice of units (i.e., without loss of
generality).

\begin{figure}[t]
  \centering
  \begin{subfigure}[b]{0.48\textwidth}
    \centering
    \includegraphics[width=\textwidth]{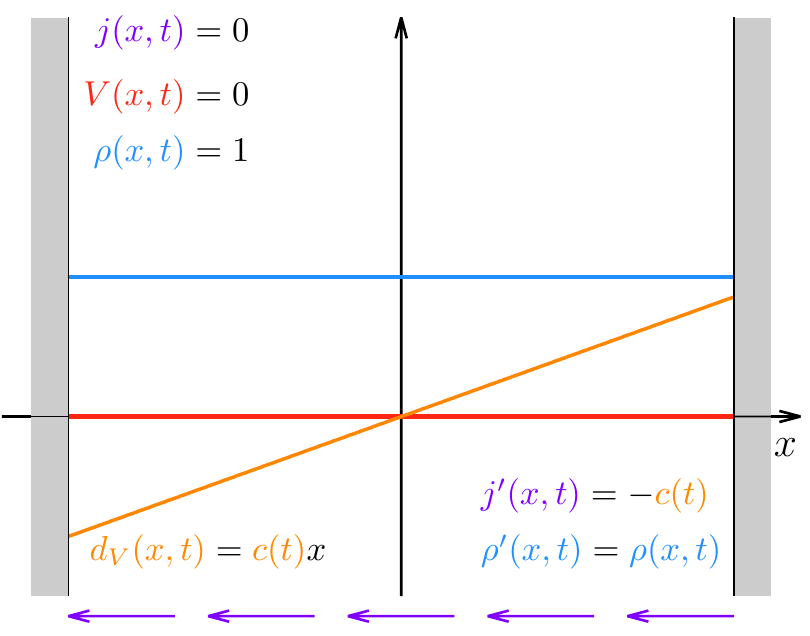}
    \caption{Bounded potentials}
    \label{fig_a_bounded}
  \end{subfigure}%
  \hfill%
  \begin{subfigure}[b]{0.48\textwidth}
    \centering
    \includegraphics[width=\textwidth]{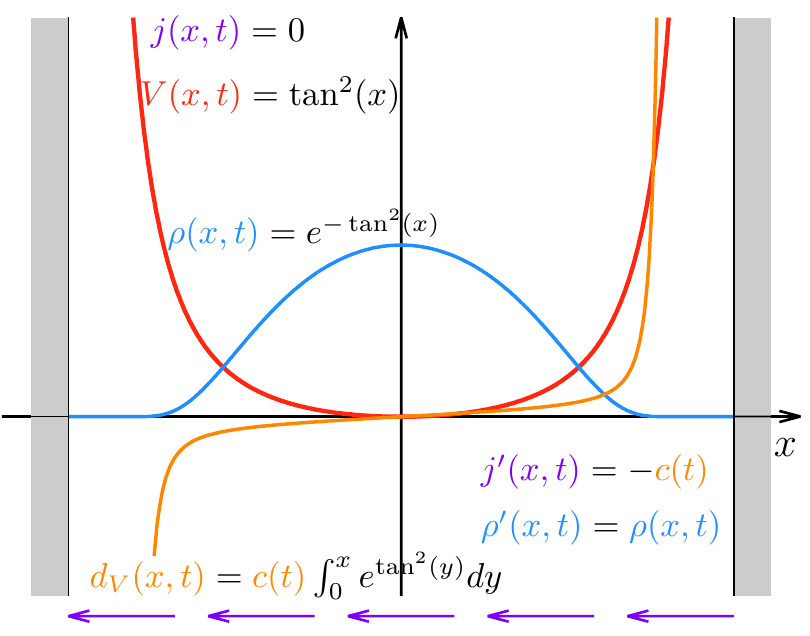}
    \caption{Diverging potentials}
    \label{fig_b_bounded}
  \end{subfigure}
  \caption{Schematics of non-unique potentials for bounded domains
  (illustrated as within hard walls): We start from the equilibrium
  solution $\rho(x,t)$ for non-interacting particles (a) without an
  external potential, i.e., $V(x,t)=0$; or (b) with a diverging potential
  $V(x,t)=\tan^2(x)$. Since the systems are in equilibrium,  $j(x,t)=0$.
  When a time-dependent potential $d_V(x,t)$ is added to the system (with
  $c(t)>0$), the density profile remains unchanged,
  $\rho'(x,t)\equiv\rho(x,t)$, but we obtain a spatially constant current
  $j'(x,t)\not\equiv0$, as indicated by the arrows at the bottom). Such a
  current requires a source and a sink at the boundary, in violation of
  our conditions in Theorem~\ref{thm:no-normal-flux}.
  }
  \label{fig_bounded}
\end{figure}

\begin{example}[Bounded systems]
  \label{example:bounded}
  Consider an ideal gas in (effectively) one dimension that is trapped
  in a bounded domain $\Omega = [-\pi/2,\pi/2]$, e.g., via hard walls,
  but without an additional potential inside the domain, i.e.,
  $V(x,t)\equiv 0$. Since the system is in equilibrium, there is no
  current $j(x,t)\equiv 0$, and the density $\rho(x,t)\equiv 1$ is
  constant in space and time; see Fig.~\ref{fig_bounded}~(a).

  If we add an external potential according to \eqref{eq:algo},
  \begin{align}
    d_V(x,t)=c(t)x,
    \label{eq:1d-homogeneous-loophole}
  \end{align}
  the potential gives rise to a current $j'(x,t)\equiv - c(t)$, which
  may vary in time, but which is spatially homogenous. Therefore, it
  preserves the homogeneous density $\rho'(x,t)\equiv 1$ even though
  particles get transported through the system.
  Obviously, such a current violates the no-flux boundary condition of
  hard walls (but it may be realized by periodic boundary conditions, as
  discussed below).

  Now, we consider another system, where the ideal gas is bounded by
  the diverging potential $V(x,t) = \tan^2(x)$. Its density is given by 
  $\rho(x,t)=e^{-\tan^2(x)}$. To obtain a homogeneous current that
  preserves the density profile, the added external potential $d_V$ has
  to diverge even stronger than the original potential, i.e.,
  \[d_V(x,t)=c(t)\int_0^x e^{\tan^2(y)}\,dy;\]
  see Fig.~\ref{fig_bounded}~(b).
\end{example}

The preceding example illustrates a more general principle. According to
Theorem~\ref{thm:surface}, if the density vanishes at the boundary, then
the nontrivial solution $d_V$ must diverge. For such a loophole to
uniqueness, it is important to check whether the boundary term of the
Smoluchowski equation in Theorem~\ref{thm:smoluchowski} actually
vanishes, e.g., using a constant number of particles and
Proposition~\ref{prop:YBG}. Otherwise, one may obtain inconsistent
results. For instance, if the example above were applied to a radially
symmetric potential, it would result in a radially symmetric current
that would not change the density profile, but particles would be
missing or accumulating at the boundary and Proposition~\ref{prop:YBG}
does not apply.

Further physically intuitive examples are available for an unbounded
domain, i.e., $\Omega=\R$. Note that whether $\Omega$ is (un-)bounded is
not essential to the question of uniqueness, as discussed above at the end of
Sec.~\ref{sec:uniqueness}.

\begin{figure}[t]
  \centering
  \begin{subfigure}[b]{0.48\textwidth}
    \centering
    \includegraphics[width=\textwidth]{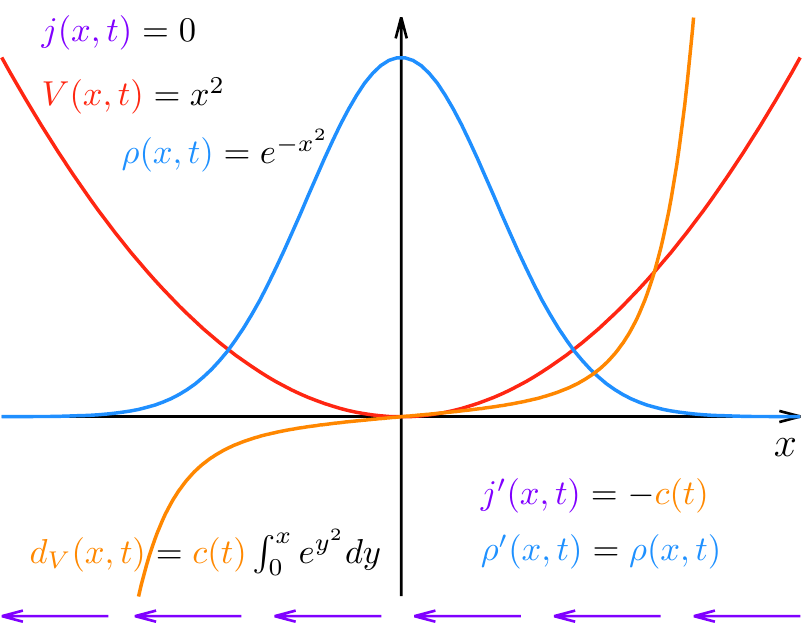}
    \caption{Exponentially diverging potentials}
    \label{fig_a}
  \end{subfigure}%
  \hfill%
  \begin{subfigure}[b]{0.48\textwidth}
    \centering
    \includegraphics[width=\textwidth]{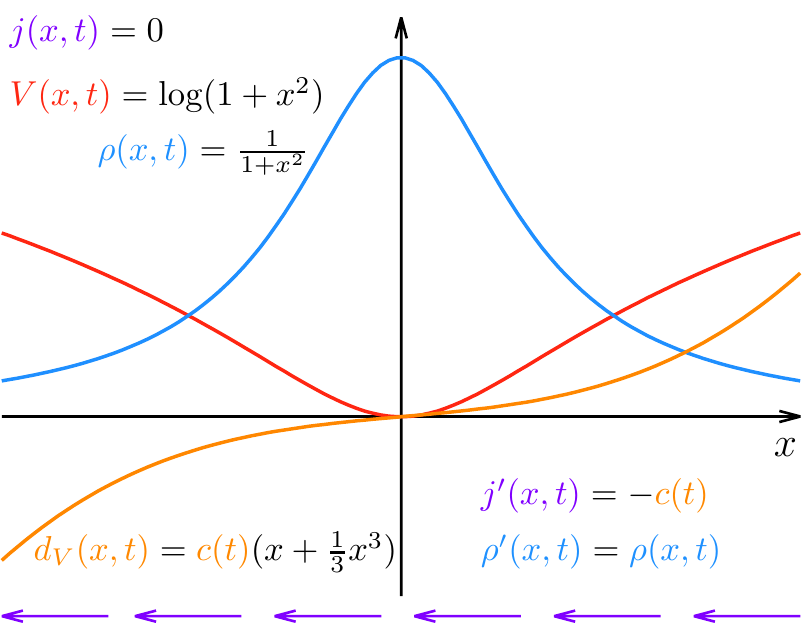}
    \caption{Polynomially diverging potentials}
    \label{fig_b}
  \end{subfigure}
  \caption{Schematics of non-unique potentials for \textit{unbounded} domains,
    analogous to Fig.~\ref{fig_bounded} but now with potentials
    (a) $V(x,t)=x^2$ and (b) $V(x,t)=\log(1+x^2)$.
  }
  \label{fig_unbounded}
\end{figure}

\begin{example}[Unbounded systems with localized density profiles]
  Consider an (effectively) one-dimensional ideal gas that is trapped in
  a harmonic potential, i.e., subject to an external potential
  $V(x,t)=x^2$ so that $\rho(x,t)=e^{-x^2}$; see
  Fig.~\ref{fig_unbounded}~(a). The system is again in equilibrium, and
  hence $j(x,t)\equiv 0$.

  An additional external potential according to \eqref{eq:algo},
  \begin{align}
    d_V(x,t)=c(t)\int_0^x e^{y^2}\,dy,
    \label{eq:expdivergence}
  \end{align}
  quickly diverges for $x\to\pm\infty$.  By construction, it results in
  a constant current that preserves the Gaussian density profile $\rho$.
  Intuitively speaking, the trick is that the external force is
  inversely proportional to $\rho$, i.e., it pulls stronger where the
  density is lower.

  In \eqref{eq:expdivergence}, the potential $d_V$ has to diverge
  exponentially fast to obtain the same density with different
  potentials. A slower divergence of $d_V$ suffices if the density
  profile has a heavy tail, e.g., $\rho(x,t)=1/(1+x^2)$ for
  $V(x,t)=\log(1+x^2)$, in which case $d_V(x,t)=O(x^3)$ for
  $x\to\pm\infty$, see Fig.~\ref{fig_unbounded}~(b).
\end{example}

Even though a constant $\rho$ defined on $\Omega=\R^d$ is, strictly
speaking, excluded from our setting (which assumes $N<\infty$), the
construction principle according to \eqref{eq:nonuniquePDE} and
\eqref{eq:algo} still works. It provides an obvious loophole to
uniqueness via a constant force.
\begin{example}[Homogeneous bulk density]
  Consider, analogous to the first case in
  Example~\ref{example:bounded}, an (effectively) one-dimensional ideal
  gas with $\rho(x,t)\equiv 1$. By adding the potential $d_V(x,t)=c(t)
  x$, we obtain a constant gradient that results in a constant current
  proportional to $c(t)$. Note that a no-flux boundary condition again
  excludes such a loophole to uniqueness.  Such a loophole works even
  for interacting systems.
\end{example}

Our procedure based on \eqref{eq:nonuniquePDE} and \eqref{eq:algo} can
also be applied to systems with periodic boundary conditions, where the
flux is not fixed at the periodic boundary. They can represent
simulations on the flat torus or periodic systems in unbounded domains.
  
\begin{example}[Periodic density profiles]
  Consider again an (effectively) one-dimensional ideal gas. If the
  system is defined on a flat torus, the additional potential $d_V(x,t)$
  is periodic. Hence it cannot be monotonically increasing as
  \eqref{eq:1d-homogeneous-loophole} would require; instead it has to be
  discontinuous (at the boundary). Alternatively, if the system is
  unbounded and described by a periodic density on $\R$, we again obtain
  a diverging potential on $\R$ in the non-unique case. In either
  setting, the additional potential violates our conditions for
  uniqueness. It generates a spatially constant change of the current.
  The forces remain bounded if the density is strictly positive. Such an
  example has already been numerically studied in simulations of
  interacting particles, going beyond the adiabatic approximation,
  within the framework of PFT and custom
  flow~\cite{de_las_heras_custom_2019, de_las_heras_perspective_2023}.
\end{example}

While the density does not uniquely specify the potential in any of the
aforementioned examples, the one-body current does so, as the underlying
principle of Corollary~\ref{cor:flux} still applies in each case (as
knowing the initial one-body density in our examples for an ideal gas
implies knowing the full initial $N$-body probability density). In this
strict mathematical sense, it is thus enough to know the current
(together with the initial density), even if it takes a simple form like
$j(x,t)=-D\beta c(t)$.

For general interaction potentials, $\rho_2$ will no longer be a
functional of $\rho$, and our simple procedure from
\eqref{eq:nonuniquePDE} and \eqref{eq:algo} breaks down. Instead, the
entire hierarchy of $n$-body correlations from
Theorem~\ref{thm:smoluchowski} has to be considered, in line with our
proof of uniqueness through Lemma~\ref{lemma}. This necessity to
properly account for superadiabatic
forces~\cite{fortini2014superadiabatic} can be easily overlooked in the
proof because it does not impose additional conditions on the uniqueness
theorem itself.

A practical way to resolve this problem of loopholes for systems with
super\-adiabatic forces is the so-called \textit{custom flow}
procedure~\cite{de_las_heras_custom_2019}. It employs many-particle
Brownian dynamics simulations to numerically determine for a given
density the potential (or, more generally, even non-conservative forces)
required to obtain a predefined flow profile; see
also~\cite{de_las_heras_perspective_2023}. 

\section{Physical implications}
\label{sec:outlookPHYS}

We have shown that the uniqueness of the potential can be inferred (i)
from the density alone, according to Theorem~\ref{thm:no-normal-flux},
see also Remark~\ref{remark:boundarycondition}, or (ii) from the current
alone, according to Corollary~\ref{cor:flux}. In both cases $P_N^{(0)}$
must be known since this initial condition is obviously required for a
well-defined behavior of an interacting many-particle system. 
What specifies appropriate boundary conditions at the $N$-body level in
general is related to the divergence theorem and the continuity equation
\eqref{eq:current_smoluN}. Let $P_N$ and $P_N'$ be different $N$-body
probability densities that evolve from the same initial condition but
differ in their time derivatives. According to the divergence theorem
and \eqref{eq:current_smoluN}, an integral of $\partial_t(P_N - P_N')$
over $\Omega^N$ is equivalent to an integral of $n(x^N) (J_N - J_N')$
over $\partial\Omega^N$. Therefore, different probability densities
correspond to different normal fluxes, which means that a well-posed
initial-value-boundary problem requires an appropriate condition on the
normal flux at the boundary. Surprisingly, even in the presence of
(known) non-conservative forces, the density suffices to formally
recover the $N$-body dynamics under physically meaningful boundary
conditions,  such as no flux, or, more generally, a fixed normal
one-body flux.

Closely related to the necessity of full $N$-body initial conditions is
our notion of a \textit{non-instantaneous} density-potential mapping in
the following sense. Uniqueness here means that $V(x,t)$ for a certain
point $x\in\Omega$ and at a certain time $t$ is determined only if the
values $\rho(y,s)$ (or $j(y,s)$) are given for all $y\in\Omega$ and $0
\leq s \leq t$. The mapping is thus not instantaneous (in the physical
  sense that only the quantities at time $t$ were known and not their
time derivatives). By requiring that our systems are analytically
accessible, we imply that $\rho$  is specified throughout; in that
sense, the potential at all times is determined by the one-body density
at all times. 

Importantly,  our results imply the existence of a theory solely based
on the one-body density (or current) which contains all higher-order
information on the dynamical system.
While our method of proof does not provide a constructive recipe of an
explicit theory, we can use the insight gained to answer the following
three essential questions on the adiabatic assumption underlying
standard DDFT and the general nature of an appropriate theoretical
framework.

\textit{Do our uniqueness theorems hold under the adiabatic approximation?} 
First note that establishing a uniqueness theorem or assessing the
performance of a density-based approximation are two separate problems.
In particular, the proof of uniqueness requires a well specified model
(and then no further approximations). Hence, we can distinguish two
different points of view:
(i) Adiabatic dynamics are utterly flawed for interacting systems and
the question is uninteresting since the density obtained from solving
the approximate PDE does not reflect the actual physical system.
Our rigorous statements are made for the exact dynamics. Hence the
question of applicability to an imprecise approximation is ill posed.
(ii) If we take the approximate adiabatic dynamics as the actual
mathematical model to be studied, then our theorems hold accordingly and
their proofs shorten considerably. 
By definition, the adiabatic approximation assumes that the information
contained in $\rho$ determines $P_N$ at all times. Thus, the proof of
Lemma~\ref{lemma} considerably simplifies to the argumentation required
for an ideal gas because the induction step for showing
\eqref{eq:vanish} only needs to be laid out for $n=1$ when we use the
functional chain rule.
Likewise, simple analytic counterexamples can be derived through
\eqref{eq:nonuniquePDE} under the adiabatic assumption or for an ideal
gas, see Sec.~\ref{sec:examples}.
In contrast, our hierarchical proof of Lemma~\ref{lemma} is essential
for the general problem and \eqref{eq:algo} cannot be used to construct
general loopholes because the actual dynamics of interacting particles
is always superadiabatic and the assumptions used to derive
\eqref{eq:nonuniquePDE} do not apply.

\textit{In how far are the known drawbacks of adiabatic DDFT reflected in our proof?} 
The adiabatic approximation only uses the instantaneous density, which
neither maps to the instantaneous potential nor provides accurate
dynamics in general.
To see this, we need to consider explicitly what happens when solving
Eq.\ \eqref{eq:rho1}. Our proof, with the help of Lemma~\ref{lemma},
explicitly takes into account the full hierarchy of $\rho_n$, which
ensures a proper transfer of the full information contained in
$P_N(\cdot,t)$ to $P_N(\cdot,t')$ at any later time $t'>t$, i.e., it
reflects the exact dynamics. In contrast, the adiabatic approximation
amounts to express the right-hand side of Eq.\ \eqref{eq:rho1} in terms
of a functional of the instantaneous one-body density $\rho$, such that
it does not contain the full information on its history. Formally, this
approximation only allows to propagate in time the information contained
in $\rho$ but not $P_N$, which is equivalent to only knowing the initial
condition $\rho(\cdot,0)$ instead of $P_N^{(0)}$. Strictly speaking, all
$P_N^{(0)}$ which are not given by a unique expression
$P_N^{(0)}[\rho(\cdot,0)]$ provided by the presumed underlying
equilibrium mapping (to an effective adiabatic potential 
\cite{de_las_Heras_conservation_2016}) cannot be represented at all. 
With this loss of generality, the density is not sufficient to properly
describe the $N$-body dynamics in the framework of adiabatic DDFT. 

\textit{How can our results contribute to going beyond adiabatic DDFT?}
While our Theorems assume the exact dynamics of the $N$-body system, any
practical theory must operate on a reduced set of variables, which
usually requires a closure to remove the implicit (hierarchical)
dependence on other variables.
Since our proofs are non-constructive, they do not directly translate
into an improved theory. Nevertheless, both our theorems and our proof
strategy help to clarify certain requirements.
Based on our results, an appropriate DDFT (i.e., a theory,  solely built
on the one-body density, that is fully closed in both space and time)
should, in general, use the full history of $\rho$ to model the
propagation in time. This indispensably requires a (generally unknown)
memory kernel. Generalized versions of DDFT utilizing such a flow kernel
have already been constructed for fluids under shear
\cite{brader_density_2011, scacchi_driven_2016,
scacchi_DDFT_sheared_2017}. Likewise, the framework of PFT formally
introduces memory through the excess dissipation functional  as an
extension of DDFT \cite{schmidt_power_2022}. Explicit PFT approximations
are provided in the form of a functional of both the density and the
current (or rather the velocity field), which is nonlocal in space and
time \cite{de_las_heras_gradient_2018,treffenstaedt_memoty_2020} and a
combination of PFT and neural networks can be used to characterize
memory effects \cite{Zimmermann_memoryPFT_2024}. 
In our induction argument for proofing Lemma~\ref{lemma}, the
information on the $N$-body initial condition $P_N^{(0)}$ is propagated
in time implicitly through the $n$-body densities $\rho_n$ and no
explicit account of the system history is needed (for analytically
accessible systems). Specifically, the knowledge of $\rho_n(\cdot,0)$
ensures a correct description of $\partial_t^{n-1}\rho(\cdot,t)$ at
$t=0$ and can thus account for differences in the external potentials up
to order $d_V^{(n-2)}$ in \eqref{eq:toshow}. This argument suggests a
minimal (or systematic) extension of adiabatic DDFT to include  memory
on a time-local level by explicitly considering the dynamics of $\rho_2$
(or further higher-order densities).
A practical recipe to do so is provided by superadiabatic-DDFT
\cite{tschopp_first-principles_2022, tschopp_superadiabatic_2023,
tschopp2024}, recently introduced by Tschopp and Brader, who arrive by a
different line of argument at the same conclusion. 
In~\cite{tschopp_superadiabatic_2023}, they argue that \textit{"within
superadiabatic-DDFT the flow history of the system is encoded in the
current value of the nonequilibrium two-body density, without the need
for a memory kernel"}. Following our argumentation in response to the
previous question, such a theory would formally propagate in time the
information contained in the initial $\rho_2(\cdot,0)$ and thus
represent these superadiabatic correlations, as
$P_N^{(0)}[\rho_2(\cdot,0)]$ is no longer enslaved by $\rho(\cdot,0)$,
and thus provide a very good description for systems interacting with
pair potentials, which evolve according to Eq.~\eqref{eq:rho1}. Indeed,
this superadiabatic-DDFT has been shown to  yield significant
improvement over standard adiabatic DDFT
\cite{tschopp_first-principles_2022, tschopp_superadiabatic_2023,
tschopp2024}.

To conclude, the drawbacks of adiabatic DDFT are not related to
violating uniqueness but can be understood and overcome when considering
the details of the proof related to superadiabatic correlations, see
Lemma~\ref{lemma}. Hence, while the density alone is often sufficient to
uniquely determine the generating forces, it is not a suitable basis
(with currently available methods) to construct a workable theory for
obtaining reliable predictions. In turn, the current approaches beyond
DDFT can stimulate further interesting mathematical questions. For
example, one could investigate whether our inclusion of
nonconservative forces provides additional insight into the PFT-based
structure-flow splitting of~\cite{de_las_heras_structure_2020}. An
alternative setting of our problem, along the lines of
superadiabatic-DDFT, is to find alternative conditions for uniqueness
using the two-body density instead of boundary currents.

\section{Outlook}
\label{sec:outlook}

In this paper, we derive explicit conditions for a well-defined
density--potential mapping under the assumptions of analytical
potentials and smooth densities. Given a well-posed initial value
boundary problem for the $N$-particle system, the external potential is
determined by the one-body density under a fixed boundary condition for
the normal flux or, alternatively, from the one-body current alone. Our
rigorous proof provides the foundation of dynamical density functional
theory (DDFT) and clarifies the relation to its recent extensions.

A useful extension will be to include non-integrable density profiles,
which should require a mere technical adaption of the setting and the
proof. Another open problem is to drop the condition of analytic
potentials. As mentioned above, a fixed-point approach as in
\cite{ruggenthaler_density-potential_2012, ruggenthaler_existence_2015}
could avoid this restriction. Such a generalization could also include
time-dependent diverging potentials (i.e., beyond static hard walls that
can already be encoded in our setting via no-flux boundary conditions).
Relatedly, our chosen restriction to a simply-connected domain and finite
interaction forces ensures that the particles can instantaneously reach each
point in the available space. Hence the density becomes positive in the whole
domain.
It will be interesting to loosen these assumptions in future work (similar to
Example~9.1 in~\cite{chayes_inverse_1984}) especially in view of scenarios
involving strong localization or caging of individual hard particles, where the
(canonical) density remains constantly zero in certain regions of space; such
scenarios are of particular interest in the context of
(D)DFT~\cite{tarazona_cavities_1997, lutsko_dft_canonical_2022, wittmann_caging_2021}. 

So far, we have also assumed the existence of a well-behaved solution
$P_N(x^N,t)$. A proof of existence could be constructed similarly to our
proof of uniqueness following ideas of van
Leeuwen~\cite{van_leeuwen_mapping_1999}.
The existence of a suitable potential will then require the solution to
an inhomogeneous PDE analogous to \eqref{eq:elliptic}. The resulting
conditions on the density and interaction potential should include, as a
special case, the known conditions for systems in
equilibrium~\cite{chayes_inverse_1984}. Similar questions have recently
been discussed in quantum mechanics~\cite{wrighton_problems_2023}.

Our method of proof has been chosen to provide physical intuitive
conclusions and may serve as the basis for investigating a broad range
of related problems. In particular, whenever we can derive a condition
of the form \eqref{eq:elliptic} the rest of the proof follows
analogously. Therefore, in future work, the rigorous search for unique
density--potential mappings should also be extended to more general
interactions and external influences, such as the following examples.
($i$)~Our strategy of proof can be straightforwardly applied to
particles with inertia, as originally considered by Chan and
Finken~\cite{chan_time-dependent_2005}. ($ii$)~Triplet or higher-order
potentials~\cite{lowen_role_1998, dobnikar_three-body_2004,
sammuller_inhomogeneous_2023}, as well as, time-dependent interactions
\cite{al-saedi_dynamical_2018, wittmann_Collective_mechano_2023,
Spannheimer_thesis_2024}, result in additional and more complex
contributions to the average interaction force but should not change the
structure of proof. ($iii$)~Similarly, our approach should be
generalizable to marked particles, where the marks may represent
different particle shapes, sizes, and
orientations~\cite{rex_dynamical_2007, wittkowski_dynamical_2011,
wittmann_fundamental_2016}.
($iv$)~A position-dependent mobility tensor should lead to a PDE
analogous to \eqref{eq:elliptic}, while $\rho(x,t)$ will be multiplied
with this tensor. This PDE will then be elliptic if and only if the
tensor is positive definite and symmetric, which is the case in models
for hydrodynamic interactions~\cite{rex_dynamical_2008,
menzel_dynamical_2016, bleibel_dynamic_2016}; see also
\cite{goddard_general_2012, goddard_well-posedness_2021}. Asymmetric
tensors that give rise to odd diffusivity~\cite{kalz_collisions_2022}
are, however, of growing interest in physics, as they, for example, can
be used to model the effect of Lorentz
forces~\cite{abdoli_nondiffusive_2020, abdoli_stationary_2020}. The
corresponding non-elliptic PDE might result in more involved conditions
for uniqueness.
($v$)~We further expect that our uniqueness theorems could also hold in
the additional presence of a (known) mechanism of self-propulsion that
constantly drives the system out of
equilibrium~\cite{wittkowski_dynamical_2011, menzel_dynamical_2016,
liverpool_steady-state_2020}. This activity generates an additional
non-conservative force field in \eqref{eq:force} that also depends on
the orientational dynamics and is thus not known \textit{a priori}.
Thus, this interplay needs to be separately accounted for in the proof
of Lemma~\ref{lemma} to ensure that superadiabatic properties of the
system are described appropriately, as in the case of known
non-conservative forces. It will be interesting to check whether the
uniqueness of the density-potential mapping would follow under
(slightly) modified conditions for these intrinsically non-equilibrium
systems.

\paragraph{Acknowledgements}
The authors are very grateful to Axel Gr\"unrock and Daniel de las Heras
for valuables pointers and the authors also thank Joseph Brader, Tobias
Kuna, Tanniemola B. Liverpool, and Salom\'ee Tschopp for stimulating
discussions. This work was funded by the Deutsche Forschungsgemeinschaft
(DFG, German Research Foundation), through the SPP 2265, under Grant
Nos.~WI 5527/1-2, KL 3391/2-2, and LO 418/25-2.
M.A.K.~also acknowledges funding and support by the Initiative and
Networking Fund of the Helmholtz Association through the Project
“DataMat.”

\paragraph{Data availability}
No datasets were generated or analyzed for this work. 

\paragraph{Competing interests}
All authors declare that they have no conflicts of interest. 



\end{document}